\newcommand{\cmark}{\ding{51}}%
\newcommand{\xmark}{\ding{55}}%
\newcommand{\N}{\mathbb{N}}
\newcommand{\R}{\mathbb{R}}
\newcommand{\set}[1]{\{#1\}}
\newcommand{\sset}[2]{\{#1 \, : \, #2\}}
\renewcommand{\vec}[1]{\mathbf{#1}}
\newcommand{\msd}[1]{\rho(#1)}
\definecolor{myblue}{RGB}{80,80,160}
\definecolor{mygreen}{RGB}{80,160,80}
\newcolumntype{B}{>{\centering\arraybackslash}p{.95cm}}%
\newcolumntype{C}{>{\centering\arraybackslash}p{3cm}}%
\newcolumntype{D}{>{\raggedright\arraybackslash}p{3cm}}%
\newcolumntype{A}{>{\centering\arraybackslash}p{2cm}}%
\newcolumntype{E}{>{\centering\arraybackslash}p{2.5cm}}%
\newtheorem{theorem}{Theorem}
\newtheorem{lemma}[theorem]{Lemma}
\newtheorem{corollary}[theorem]{Corollary}
\newtheorem{definition}[theorem]{Definition}
\newtheorem{example}[theorem]{Example}
\newtheorem{remark}[theorem]{Remark}
\begin{document}
\title{\bfseries
Topological Price of Anarchy Bounds for \\
Clustering Games on Networks%
\footnote{A preliminary version of this work appeared in the Proceedings of the 15th International Conference on Web and Internet Economics (WINE 2019) \cite{KleerS2019}.}%
}

\author{
Pieter Kleer \\ 
Max Planck Institute for Informatics\\  
Saarland Informatics Campus (SIC)\\
Saarbr\"ucken, Germany\\
\texttt{pkleer@mpi-inf.mpg.de}
\and
Guido Sch\"afer \\ 
%Networks and Optimization Group, 
Centrum Wiskunde \& Informatica (CWI) \\
%Department of Econometrics and Operations Research, 
Vrije Universiteit Amsterdam \\
Amsterdam, The Netherlands\\
\texttt{g.schaefer@cwi.nl}
}

\maketitle            

\begin{abstract}
We consider clustering games in which the players are embedded in a network and want to coordinate (or anti-coordinate) their strategy with their neighbors. The goal of a player is to choose a strategy that
maximizes her utility given the strategies of her neighbors. Recent studies show that even very basic variants of these games exhibit a large Price of Anarchy: A large inefficiency between the total utility generated in centralized outcomes and equilibrium outcomes in which players selfishly try to maximize their utility. Our main goal is to understand how structural properties of the network topology impact the inefficiency of these games. We derive \emph{topological bounds} on the Price of Anarchy for different classes of clustering games. These topological bounds provide a more informative assessment of the inefficiency of these games than the corresponding (worst-case) Price of Anarchy bounds. As one of our main results, we derive (tight) bounds on the Price of Anarchy for clustering games on Erd\H{o}s-R\'enyi random graphs (where every possible edge in the network is present with a fixed probability), which, depending on the graph density, stand in stark contrast to the known Price of Anarchy bounds.
\end{abstract}

%%%%%%%%%%%%%%%%%%%%%%%%%%%%%%%%%%%%%%%%%%%%%%%%%%%%%%%%%%%%%%%%%%%%%%
\newpage
\section{Introduction}
Clustering games on networks constitute a class of strategic games in which the players are embedded in a network and want to coordinate (or anti-coordinate) their choices with their neighbors. These games capture several key characteristics encountered in applications such as opinion formation, technology adoption, information diffusion or virus spreading on various types of networks, e.g., the Internet, social networks and biological networks.

Different variants of clustering games have recently been studied intensively in the algorithmic game theory literature, both with respect to the existence and the inefficiency of equilibria, see, e.g., \cite{Anshelevich2014,Apt2015,Feldman2015,GM09,Gourves2010,Hoefer2007,KPR13,Rahn2015}.
Unfortunately, several of these studies reveal that the strategic choices of the players may lead to equilibrium outcomes that are highly inefficient. 
Arguably the most prominent notion to assess the inefficiency of equilibria is the \emph{Price of Anarchy (PoA)} \cite{Koutsoupias1999}, which refers to the worst-case ratio of the optimal social welfare and the social welfare of a (pure) Nash equilibrium. It is known that even the most basic clustering games exhibit a large (or even unbounded) Price of Anarchy (see below for details).
%It is well-known that strategic choices may lead to Nash equilibria which are inefficient. 
These negative results naturally trigger the following questions: Is this high inefficiency inevitable in clustering games on networks? Or, can we trace more precisely what causes a large inefficiency? These questions constitute the starting point of our investigations:
%\begin{quote}
\begin{center}
\textit{Our main goal in this paper is to understand how structural properties of the \\
network topology impact the Price of Anarchy in clustering games.}
\end{center}

\medskip
\noindent In general, our idea is that a more fine-grained analysis may reveal topological parameters of the network which can be used to derive more accurate bounds on the Price of Anarchy; we term such bounds \emph{topological Price of Anarchy bounds}. Given the many applications of clustering games on different types of networks, our hope is that such topological bounds will be more informative than the corresponding worst-case bounds. 
Clearly, this hope is elusive for a number of fundamental games on networks whose inefficiency is known to be \emph{independent} of the network topology, the most prominent example being selfish routing games studied in the seminal work by Rougharden and Tardos \cite{RT02}.
But, in contrast to these games, clustering games exhibit a strong \emph{locality property} induced by the network structure, i.e., the utility of each player is affected only by the choices of her direct neighbors in the network. 
This observation 
also motivates our choice of quantifying the inefficiency by means of topological parameters (rather than other parameters of the game).

In this paper, we derive topological bounds on the Price of Anarchy for different classes of clustering games. Our bounds reveal that the Price of Anarchy depends on different topological parameters in the case of symmetric and asymmetric strategy sets of the players and, depending on these parameters, stand in stark contrast to the known worst case bounds.
As one of our primary benchmarks, we use Erd\H{o}s-R\'enyi random graphs \cite{Gilbert1959} to obtain a precise understanding of how these parameters affect the Price of Anarchy. More specifically, we show that the Price of Anarchy of clustering games on random graphs, depending on the graph density, improves significantly over the worst case bounds.  
To the best of our knowledge, this is also the first work that addresses the inefficiency of equilibria on random graphs. (We note that Valiant and Roughgarden~\cite{Roughgarden2010} study Braess' paradox in large random graphs; see Section \ref{sec:rel_work}.)

We note that the applicability of our topological Price of Anarchy bounds is not limited to the class of Erd\H{o}s-R\'enyi random graphs. The main reason for using these graphs is that their structural properties are well-understood. In particular, our topological bounds can be applied to any graph class of interest (as long as certain structural properties are well-understood).

Apart from our topological Price of Anarchy bounds, we also give a complete characterization of what type of \emph{distribution rules}, that determine how utility generated by two adjacent players in the network is split when they (anti-)coordinate, guarantee the convergence of best-response dynamics in symmetric clustering games. 

Altogether, our results give a more fine-grained view on clustering and coordination games.

\subsection{Our Clustering Games}
We study a generalization of the unifying model of \emph{clustering games} introduced by Feldman and Friedler \cite{Feldman2015}:
We are given an undirected graph $G = (V, E)$ on $n = |V|$ nodes whose edge set $E = E_c \cup E_a$ is partitioned into a set of \emph{coordination} edges $E_c$ and a set of \emph{anti-coordination} edges $E_a$. (The game is called a \emph{coordination game} if all edges are coordination edges and an \emph{anti-coordination game} (or \emph{cut game}) if all edges are anti-coordination edges.)
Further, we are given a set $[c] = \set{1, \dots, c}$ of $c > 1$ colors and edge-weights $w : E \rightarrow \R_{\geq 0}$. (In this paper, we use $[k]$ to denote the set $\set{1, \dots, k}$ for a given integer $k \ge 1$.)
Each node $i$ corresponds to a player who chooses a color $s_i$ from her color set $S_i \subseteq [c]$.
We say that the game is \emph{symmetric} if $S_i = [c]$ for all $i \in V$ and \emph{asymmetric} otherwise. 
An edge $e = \set{i, j} \in E$ is \emph{satisfied} if it is a coordination edge and both $i$ and $j$ choose the same color, or if it is an anti-coordination edge and $i$ and $j$ choose different colors. 
The goal of player $i$ is to choose a color $s_i \in S_i$ such that the weight of all satisfied edges incident to $i$ is maximized.

We consider a generalization of these games by incorporating additionally: (i) individual player preferences (as in \cite{Rahn2015}), and (ii) different distribution rules (as in \cite{Anshelevich2014}): We assume that each player $i$ has a \emph{preference function} $q_i: S_i \rightarrow \mathbb{R}_{\ge 0}$ which encodes her preferences over the colors in $S_i$. Further, player $i$ has a \emph{split parameter} $\alpha_{ij} \ge 0$ for every incident edge $e = \set{i,j}$ which determines the share she obtains from $e$: if $e$ is satisfied then $i$ obtains a proportion of $\alpha_{ij}/(\alpha_{ij} + \alpha_{ji})$ of the weight $w_e$ of $e$. 
The utility $u_i(s)$ of player $i$ with respect to strategy profile $s = (s_1, \dots, s_n)$ 
is then the sum of the individual preference $q_i(s_i)$ and the total share of all satisfied edges incident to $i$. 
We consider the standard utilitarian \emph{social welfare} objective $u(s) = \sum_i u_i(s)$.

We use $\bar\alpha_e$ to denote the \emph{disparity} of an edge $e = \set{i, j}$, defined as $\bar{\alpha}_{e} =  \max\set{\alpha_{ij}/\alpha_{ji}, \alpha_{ji}/\alpha_{ij}}$, and let $\bar{\alpha} = \max_{e \in E} \bar{\alpha}_{e}$ refer to the maximum disparity of all edges. We say that the game has the \emph{equal-split distribution rule} if $\bar\alpha = {1}$ (equivalently, $\alpha_{ij} = \alpha_{ji}$ for all $\set{i, j} \in E$).

Our clustering games generalize several other strategic games, which were studied extensively in the literature before, such as \emph{max cut games} and \emph{not-all-equal satisfiability games} \cite{GM09}, \emph{max $k$-cut games} \cite{Gourves2010}, \emph{coordination games} \cite{Apt2015}, \emph{clustering games} \cite{Feldman2015} and \emph{anti-coordination games} \cite{KPR13}. In turn, in Appendix \ref{app:extensions} we provide some natural generalizations of our clustering games. However, we argue that the results obtained in this work do not carry over to those more general settings.

\subsection{Our Contributions}
We derive results for symmetric and asymmetric clustering games. 
We elaborate on our main findings for symmetric clustering games only below; our results for the asymmetric case are detailed in Section~\ref{sec:ext}. 
An overview of the bounds derived in this paper is given in Table~\ref{tab:overview}.

\begin{table}[t]
\begin{center}
\begin{footnotesize}
\begin{tabular}
{|D|BBB|C@{\ \ }A|E|}
\multicolumn{7}{c}{\bf SYMMETRIC CLUSTERING GAMES} \\
\multicolumn{7}{c}{} \\[-2ex]
\hline
\bf Graph topology & \bf Coord. & \bf Indiv. & \bf Distr. & \multicolumn{2}{c|}{\textbf{Topological PoA}} & \textbf{PoA} \\
\bf  & \bf only & \bf pref. & \bf $\alpha$ & \multicolumn{2}{c|}{(our bounds)} & (prev.~work)\\
\hline 
\hline 
arbitrary & \xmark & \cmark & $+$ & $1 + \left(1 + \bar{\alpha} \right) \msd{G}$ & (Thm.~\ref{thm:poa})& \multirow[m]{6}{*}{$c$ \cite{Anshelevich2014,Feldman2015}} \\
planar & \xmark & \cmark & $+$ & $\leq 4 + 3\bar\alpha$ & (Cor.~\ref{cor:planar}) & \\
arbitrary & \xmark & \cmark & $\mathbf{1}$ & $1 +2 \msd{G}$ & (Cor.~\ref{cor:poa_unweighted}) & \\
arbitrary & \xmark & \cmark & $\mathbf{1}$ & $\leq 5 + 2 \msd{G_c}$ & (Thm.~\ref{thm:poa_clustering}) & \\
sparse random & \cmark & \cmark & $\mathbf{1}$  & $\Theta(1)$ & (Cor.~\ref{cor:sparse}) & \\
dense random & \cmark & \xmark & $\mathbf{1}$  & $\Omega(c)$ &  (Thm.~\ref{thm:random_graph}) & \\
\hline
\end{tabular}

\medskip
\begin{tabular}
{|D|BBB|C@{\ \ }A|E|}
\multicolumn{7}{c}{\bf ASYMMETRIC COORDINATION GAMES} \\
\multicolumn{7}{c}{} \\[-2ex]
\hline
\bf Graph topology & \bf Coord. & \bf Indiv. & \bf Distr. & \multicolumn{2}{c|}{\textbf{$(\epsilon, k)$-Topological PoA}} & \textbf{$(\epsilon, k)$-PoA}\\
\bf  & \bf only & \bf pref. & $\alpha$ & \multicolumn{2}{c|}{(our bounds)} & (prev.~work) \\
\hline 
\hline
arbitrary & \cmark & \xmark & $\mathbf{1}$ & $\le 2\epsilon \Delta(G)$ & (Thm.~\ref{lem:max_degree}) & \multirow[b]{2}{*}{$\le 2\epsilon\frac{n-1}{k-1}$} \\
arbitrary & \cmark & \xmark & $\mathbf{1}$ & $\ge \epsilon (\frac{\Delta(G)}{k-1} - 1)$ & (Thm.~\ref{lem:max_degree}) & \\
dense random & \cmark & \xmark & $\mathbf{1}$ & $\Omega(\epsilon n)$ & & 
\multirow[t]{1}{*}{$\ge 2\epsilon\frac{n-k}{k-1} + 1$} \\
sparse random & \cmark & \xmark & $\mathbf{1}$ & $\Theta\big( \frac{\epsilon\ln(n)}{\ln\ln(n)}\big)$ & (Thm.~\ref{thm:max_degree}) & \multirow[t]{2}{*}{\cite{Rahn2015}}  \\
\ + common color & \cmark & \xmark & $\mathbf{1}$ & $O(1)$ & (Thm.~\ref{thm:average_degree}) & \multirow[t]{2}{*}{} \\
\hline
\end{tabular}

\bigskip
\caption{
Overview of our topological Price of Anarchy bounds for symmetric and asymmetric clustering games. A ``$+$'' or ``$\mathbf{1}$'' in the column ``distr. $\alpha$'' indicates whether the distribution rule $\alpha$ is positive or equal-split, respectively. 
$\bar\alpha$ is the maximum disparity, and $c$ is the number of colors. 
$\msd{G}$ and $\Delta(G)$ refer to the maximum subgraph density and maximum degree of $G$, respectively. 
The stated bounds for random graphs hold with high probability.
}
\label{tab:overview}
\end{footnotesize}
\end{center}
\end{table}

\medskip\noindent
\textbf{1. Topological Price of Anarchy Bound.} 
We show that the Price of Anarchy for symmetric clustering games is bounded as a function of the \emph{maximum subgraph density} of $G$ which is defined as $\msd{G} = \max_{S \subseteq V} \{|E[S]|/|S|\}$, where $|E[S]|$ is the number of edges in the subgraph induced by $S$.
More specifically, we prove that $\text{PoA} \le 1 + (1 + \bar{\alpha}) \msd{G}$ and that this bound is tight (even for coordination games).
Using this topological bound, we are able to show that the Price of Anarchy is at most $4+3\bar\alpha$ for clustering games on planar graphs and $1+2\msd{G}$ for coordination games with equal-split distribution rule. %\pkrem{It is technically not a refinement, as we do not get back the original result of 1 + 2$\rho(G)$.}
We also derive a (qualitatively) refined bound of $\text{PoA} \le 5 + 2\msd{G[E_c]}$ for clustering games with equal-split distribution rule which reveals that the maximum subgraph density with respect to the graph $G[E_c]$ (or simply $G_c$) induced by the \emph{coordination edges $E_c$ only} is the crucial topological parameter determining the Price of Anarchy.

These bounds provide more refined insights than the known (tight) bound of $\text{PoA} \le c$ (number of colors) on the Price of Anarchy for (i) symmetric coordination games with individual preferences and arbitrary distribution rule \cite{Anshelevich2014}, and (ii) clustering games without individual preferences and equal-split distribution rule \cite{Feldman2015} (both being special cases of our model).
An important point to notice here is that this bound indicates that the Price of Anarchy is unbounded if the number of colors $c = c(n)$ grows as a function of $n$. In contrast, our topological bounds are independent of $c$ and are thus particularly useful when this number is large (while the maximum subgraph density is small). 
Moreover, our refined bound of $5 + 2\msd{G[E_c]}$ mentioned above provides a nice qualitative bridge between the facts that for max-cut (or anti-coordination) games the price of anarchy is known to be constant, whereas for coordination games the price of anarchy might grow large.

\medskip\noindent
\textbf{2. Price of Anarchy for Random Coordination Games.} 
We derive the first price of anarchy bounds for coordination games on random graphs. 
We focus on the \emph{Erd\H{o}s-R\'enyi random graph model}  \cite{Gilbert1959} (also known as $G(n,p)$), where each graph consists of $n$ nodes and every edge is present (independently) with probability $p \in [0,1]$. 
More specifically, we show that the Price of Anarchy is constant (with high probability) for coordination games on sparse random graphs (i.e., $p = d/n$ for some constant $d > 0$) with equal-split distribution rule. 
In contrast, we show that the Price of Anarchy remains $\Omega(c)$ (with high probability) for dense random graphs (i.e., $p = d$ for some constant $0 < d \leq 1$).

Note that our constant bound on the Price of Anarchy for sparse random graphs stands in stark contrast to the deterministic bound of $\text{PoA} = c$ \cite{Anshelevich2014,Feldman2015} (which could increase with the size of the network). On the other hand, our bound for dense random graphs reveals that we cannot significantly improve upon this bound through randomization of the graph topology. 

It is worth mentioning that all our results for random graphs hold against an \emph{adaptive adversary} who can fix the input of the clustering game \emph{knowing} the realization of the random graph.
To obtain these results, we need to exploit some deep probabilistic results on the maximum subgraph density and the existence of perfect matchings in random graphs.

\medskip\noindent
\textbf{3. Convergence of Best-Response Dynamics.} 
In general, pure Nash equilibria are not guaranteed to exist for clustering games with \emph{arbitrary} distribution rules $\alpha$, even if the game is symmetric (see, e.g., \cite{Anshelevich2014}). While some sufficient conditions for the existence of pure Nash equilibria, or, the convergence of best-response dynamics (see also \cite{Anshelevich2014}) are known, a complete characterization is elusive so far. 

In this work, we instead obtain a complete characterization of the class of distribution rules which guarantee the convergence of best-response dynamics in clustering games on a fixed network topology. We prove that best-response dynamics converge if and only if $\alpha$ is a \emph{generalized weighted Shapley distribution rule} (Theorem \ref{thm:best_response_2}). Our proof relies on the fact that there needs to be some form  of \emph{cyclic consistency} similar to the one used in \cite{Gopalakrishnan2013}. 
In fact, our characterization results regarding the existence of pure Nash equilibria and convergence of best-response dynamics are conceptually similar to the work of Chen et al.~\cite{Chen2010} and Gopalakrishnan et al.~\cite{Gopalakrishnan2013} (see Section \ref{sec:existence} for more details). 

Prior to our work, the existence of pure Nash equilibria was known for certain special cases of coordination games only, namely for symmetric coordination games with individual preferences and $c = 2$ \cite{Anshelevich2014}, and for symmetric coordination games without individual preferences \cite{Feldman2015}. To the best of our knowledge, this is the first characterization of distribution rules in terms of best-response dynamics, which, in particular, applies to the settings in which pure Nash equilibria are guaranteed to exist for every distribution rule \cite{Anshelevich2014,Feldman2015}.

\subsection{Related Work}
\label{sec:rel_work}
The literature on clustering and coordination games is vast; we only include references relevant to our model here. The proposed model above is a mixture of (special cases of) existing models in \cite{Anshelevich2014,Apt2015,Feldman2015,Rahn2015}. 

Anshelevich and Sekar \cite{Anshelevich2014} consider symmetric coordination games with individual preferences and (general) distribution rules. They show existence of \emph{$\epsilon$-approximate $k$-strong equilibria}, $(\epsilon,k)$-equilibria for short, for various combinations; in particular, $(2,k)$-equilibria always exist for any $k$. 
%We refer the reader to \cite{Anshelevich2014} for more details. 
Moreover, they show that the number of colors $c$ is an upper bound on the PoA. 
Apt et al. \cite{Apt2015} study asymmetric coordination games with unit weights, zero individual preferences, and equal-split distribution rules. They derive an almost complete picture of the existence of $(1, k)$-equilibria for different values of $c$. %(see also Table~\ref{table:overview} in the appendix).
Feldman and Friedler \cite{Feldman2015} introduce a unified framework (as introduced above) for studying the (strong) Price of Anarchy in clustering games with individual preferences set to zero and equal-split distribution rules. %We refer the reader to \cite{Feldman2015} for an overview of their results. 
In particular, they show that the number of colors is an upper bound on the PoA and that $2(n-1)/(k-1)$ is an upper bound on the $(1,k)$-PoA.
Rahn and Sch\"afer \cite{Rahn2015} consider the more general setting of polymatrix coordination games with equal-split distribution rule, of which our asymmetric coordination games with individual preferences are a special case. They show a bound of $2\epsilon(n-1)/(k-1)$ on the $(\epsilon,k)$-PoA and that an $(\epsilon,k)$-equilibrium is guaranteed to exist for any $\epsilon \geq 2$ and any $k$. 

There is also a vast literature on different variants of anti-coordination (or cut) games, see, e.g., \cite{Gourves2010,Hoefer2007} and the references therein, which are also captured by our clustering games. In a recent paper, Carosi and Gianpiero \cite{Carosi2018} consider so-called \emph{$k$-coloring games}. 
Moreover, clustering and coordination games were also studied on directed graphs \cite{Apt2015,Carosi2017}. Finally, certain coordination and clustering games can be seen as special cases of hedonic games \cite{Dreze1980}; we refer the reader to \cite{Bilo2018} for, in particular, a survey of recent literature on (fractional) hedonic games. 
Identifying topological inefficiency bounds for these type of games, as well as for clustering games on directed graphs, could be an interesting direction for future work. (Our results do not seem to extend to clustering games on directed graphs. One could model a directed edge $e = (i,j)$ by setting $\alpha_{ij} = 0$ and $\alpha_{ji} > 0$. E.g., Theorem \ref{thm:poa} does not apply then as $\bar{\alpha} = \infty$ in this case.)

Regarding the study of the inefficiency of equilibria on random graphs, closest to our work seems to be the work by \cite{Roughgarden2010}. They study the Braess paradox on large Erd\H{o}s-R\'enyi random graphs and show that for certain settings the Braess paradox occurs with high probability as the size of the network grows large. The study of randomness in games has also received some attention in other settings, see, e.g., \cite{Amiet2019,Barany2007}. These are mostly settings with small strategy sets and random utility functions, and are not comparable with ours.

In the case of equal-split distribution rules, our clustering games can also be modelled as exact potential (or congestion) games \cite{Rosenthal1973}. The  inefficiency of pure Nash equilibria in these games has received a lot of attention, see, e.g., \cite{Christodoulou2005b,Christodoulou2005,Aland2006,Caragiannis2011,Kleer2017EC,Kleer2019tcs} and references therein. However, none of these results are applicable to the clustering games considered in this work. 
Finally, our games are also a special case of so-called \emph{distributed welfare games} as studied, e.g., by Marden and Wierman \cite{Marden2013}.

\section{Preliminaries} \label{sec:pre}
As introduced above, an instance of a \emph{clustering game} $\Gamma = (G,c,(S_i),(\alpha_{ij}),w,q)$ is given by: 
\begin{itemize}
\item an undirected graph $G = (V,E)$, where the set of edges $E = E_c \cup E_a$ is partitioned into coordination edges $E_c$ and anti-coordination edges $E_a$; 
\item a subset $S_i \subseteq [c]$ of colors available to player $i \in V$;
\item a split parameter $\alpha_{ij} \ge 0$ for every player $i \in V$ and incident edge $\set{i, j} \in E$;
\item a weight function $w : E \rightarrow \R_{\geq 0}$ on the edges;
\item a vector $q = (q_i)_{i \in V}$ of individual preference functions $q_i : S_i \rightarrow \R_{\geq 0}$.
\end{itemize}
Whenever we refer to a \emph{clustering game} below, we assume that all of the above input parameters are non-trivial; we specify the respective restrictions otherwise.

Each node $i \in V$ corresponds to a player whose goal is to choose a color $s_i \in S_i$ from the set of colors available to her to maximize her utility
$$
u_i(s) = q_i(s_i) + \sum_{\{i,j\} \in E_c: s_i = s_j} \frac{\alpha_{ij}}{ \alpha_{ij} + \alpha_{ji}} w_{ij}  + \sum_{\{i,j\} \in E_a: s_i \neq s_j} \frac{\alpha_{ij}}{ \alpha_{ij} + \alpha_{ji}} w_{ij}.
$$

We call $\alpha = (\alpha_{ij}) \ge 0$ a \emph{distribution rule}. We assume that $\alpha$ satisfies $\alpha_{ij} + \alpha_{ji} > 0$ for every edge $e = \{i,j\} \in E$; in particular, not both $i$ and $j$ have a zero split for edge $e$. 
We say that $\alpha$ is \emph{positive} if $\alpha_{ij} > 0$ and $\alpha_{ji} > 0$ for all $e = \{i,j\} \in E$; we also write $\alpha > \mathbf{0}$.
Further, $\alpha$ is called the \emph{equal-split} distribution rule if $\alpha_{ij} = \alpha_{ji}$ for all $e = \{i,j\} \in E$; we also indicate this by $\alpha = \mathbf{1}$.
The \emph{disparity} of an edge $e = \set{i, j}$ is defined as $\bar{\alpha}_{e} =  \max\set{\alpha_{ij}/\alpha_{ji}, \alpha_{ji}/\alpha_{ij}}$ and we use $\bar{\alpha} = \max_{e \in E} \bar{\alpha}_{e}$ to denote the maximum disparity. 

We say that the clustering game is \emph{symmetric} if $S_i = \set{1, \dots, c}$ for every player $i \in V$ and \emph{asymmetric} otherwise. If we focus on symmetric clustering games, we omit the explicit reference of the strategy sets $(S_i)$ with $S_i = [c]$. 
A clustering game is called a \emph{coordination game} if $E_a = \emptyset$ and an \emph{anti-coordination game} (or \emph{cut game}) if $E_c = \emptyset$. We use $n = |V|$ to refer to the number of players. 

A strategy profile $s = (s_1, \dots, s_n) \in \times_{i \in V} S_i$ is an \emph{$\epsilon$-approximate $k$-strong equilibrium} with $\epsilon \ge 1$ and $k \in [n]$, or \emph{$(\epsilon, k)$-equilibria} for short, if for every set of players $K \subseteq V$ with $|K| \le k$ and every deviation $s'_K = (s'_i)_{i \in K}$, there is at least one player $j \in K$ such that $\epsilon \cdot u_j(s) \geq  u_j(s_{-K},s'_K)$. That is, for any joint deviation of the players in $K$ from strategy profile $s$, there is at least one player that cannot improve her utility by more than a factor $\epsilon$. 

Let $\text{$(\epsilon, k)$-NE}(\Gamma)$ be the set of all $(\epsilon, k)$-equilibria of a game $\Gamma$. 
The \emph{$(\epsilon,k)$-Price of Anarchy} of $\Gamma$ is then defined as
$$
(\epsilon,k)\text{-PoA}(\Gamma) = \max_{s \in \text{$(\epsilon, k)$-NE}(\Gamma)} \frac{u(s^*)}{u(s)},
$$
where $s^*$ a strategy profile maximizing the \emph{social welfare} objective $u(s) = \sum_{i \in V} u_i(s)$.
For a class of clustering games $\mathcal{G}$ the \emph{$(\epsilon,k)$-Price of Anarchy} is given by $(\epsilon,k)\text{-PoA}(\mathcal{G}) = \sup_{\Gamma \in \mathcal{G}} \ (\epsilon,k)\text{-PoA}(\Gamma)$.
We only consider pairs $(\epsilon,k)$ for which $\text{$(\epsilon, k)$-NE}(\Gamma) \neq \emptyset$ for all $\Gamma \in \mathcal{G}$. When $\epsilon = 0$ and $k = 1$ we simply write PoA$(\cdot)$ instead of $(1,1)$-PoA$(\cdot)$.

\subsection{Random Clustering Games}
In our probabilistic framework to study the Price of Anarchy of random clustering games, we use the well-known \emph{Erd\H{o}s-R\'enyi random graph model} \cite{Gilbert1959}, denoted by $G(n, p)$: There are $n$ nodes and every (undirected) edge is present (independently) with probability $p = p(n) \in [0,1]$. (Although this model was first introduced by Gilbert, it is often referred to as the \emph{Erd\H{o}s-R\'enyi random graph model}.)
We say that a random graph is \emph{sparse} if $p = d/n$ for some constant $d > 0$, and it is \emph{dense} if $p = d$ for some constant $0 < d < 1$. In this paper, we focus on random graph instances with equal-split distributions rules. (Some of our results naturally extend to more general distribution rules, but we omit the (technical) details here because they do not provide additional insights.)

Fix some probability $p = p(n) \in [0,1]$ and let $\beta = \beta(n,c(n))$ be a given function.
Define $\mathcal{G}_{G_n}$ as the set of all clustering games on random graph $G_n \sim G(n,p)$.
We say that the \emph{Price of Anarchy for random clustering games 
is at most $\beta$ with high probability ($\text{PoA}(\mathcal{G}_{G_n}) \leq \beta$, for short)} if
$
\mathbb{P}_{G_n \sim G(n,p)}\{\text{PoA} \left(\mathcal{G}_{G_n}\right) \leq \beta\} %\geq 1 - a(n)
\geq 1 - o(1). 
$
We use a similar definition if we want to lower bound the Price of Anarchy. 
Finally, for a constant $\beta$ (independent of $n$ and $c$)
we say that the \emph{Price of Anarchy for random clustering games is $\beta$ with high probability ($\text{PoA}(\mathcal{G}_{G_n}) \rightarrow \beta$, for short)} if for all $\varepsilon > 0$
$
\mathbb{P}_{G_n \sim G(n,p)} \left\{|\text{PoA} \left(\mathcal{G}_{G_n}\right) - \beta| \le \varepsilon \right\} \ge 1 - o(1).
$
All our results for clustering games on random graphs hold with high probability.

\subsection{Shapley Distribution Rules}

We adapt the definition of Shapley distribution rules for resource allocation games  \cite{Gopalakrishnan2013} to our setting. 
A distribution rule $\alpha$ corresponds to a \emph{generalized weighted Shapley distribution rule} if and only if there exists a permutation $\sigma$ of the players in $V$ and weight vector $\gamma \in \R_{\ge 0}^V$ such that the following two conditions are satisfied for every edge $e = \{i,j\}$:
\begin{enumerate}[(i)]
\item If $\alpha_{ij} = 0$, then $\sigma(i) < \sigma(j)$. 
\item If $\alpha_{ij} > 0$, then 
$
\frac{\alpha_{ij}}{\alpha_{ij} + \alpha_{ji}} = \frac{\gamma_i}{\gamma_i + \gamma_j}.
$
\end{enumerate}
If all weights are strictly positive, then the resulting distribution rule is a \emph{weighted Shapley distribution rule}. 
If $\gamma_i = \gamma_j$ for all $i,j \in V$ the resulting distribution rule is an \emph{unweighted Shapley distribution rule}. 
Note that this case corresponds to an equal-split distribution rule.

\section{Refined Bounds on the Price of Anarchy}\label{sec:poa}
In this section, we first establish our topological bound on the Price of Anarchy for symmetric clustering games and then use it to derive new bounds for some special cases as well as random clustering games.

\subsection{Topological Price of Anarchy Bound}

Our topological bound depends on the \emph{maximum subgraph density} of $G$ which is defined as $\msd{G} = \max_{S \subseteq V} \{|E[S]|/|S|\}$, where $|E[S]|$ is the number of edges in the subgraph induced by $S$. Recall that $\bar\alpha$ refers to the maximum disparity.

\begin{theorem}[Density bound]\label{thm:poa}
Let $\Gamma = (G,c,\alpha,w,q)$ be a symmetric clustering game with $\alpha > 0$.
Then
$
%\text{PoA}(\Gamma) \leq 1 + \left(1 + \max_{\{i,j\} \in E} \bar{\alpha}_{ij}\right)\cdot \max_{S \subseteq V} \left\{\frac{|E[S]|}{|S|}\right\},
%\text{PoA}(\Gamma) \leq 1 + \left(1 + \bar{\alpha} \right)\cdot \max_{S \subseteq V} \left\{\frac{|E[S]|}{|S|}\right\}.
\text{PoA}(\Gamma) \le 1 + \left(1 + \bar{\alpha} \right) \msd{G}
$
and this is tight in general.
\end{theorem}

\begin{proof}
Let $s$ and $s^*$ be a Nash equilibrium and a social optimum, respectively. 
Consider an edge $\{i,j\} \in E$ and assume without loss of generality that $u_i(s) \leq u_j(s)$. 
If $\{i,j\}$ is a coordination edge, then
$
%u_a(s) \geq u_a(s_{-a},s_b) \geq \frac{\alpha_{ab}}{\alpha_{ab} + \alpha_{ba}}w_{ab},
u_i(s) \geq u_i(s_{-i},s_j) \geq \frac{\alpha_{ij}}{\alpha_{ij} + \alpha_{ji}}w_{ij},
$
where $(s_{-i},s_j)$ is the strategy profile in which player $i$ deviates to the color of player $j$ and all other players play according to $s$. 
Suppose $\{i,j\}$ is an anti-coordination edge. If $s_i \neq s_j$, then we trivially have $u_i(s) \geq \alpha_{ij}/(\alpha_{ij} + \alpha_{ji})w_{ij}$ by non-negativity of the weights and individual preferences. If $s_i = s_j$, then the same inequality holds by using the Nash condition for some arbitrary color which is not $s_j$. (We may assume that every player has at least two colors in her strategy set.)
In either case, we conclude that
\begin{equation}\label{eq:weights}
w_{ij} \leq \left(1 + \frac{\alpha_{ji}}{\alpha_{ij}}\right)u_i(s)
\leq \left(1 + \max_{e \in E} \bar{\alpha}_{e}\right) u_i(s)
= \left(1 + \bar{\alpha}\right) u_i(s).
\end{equation}
Moreover, by exploiting that $s$ is a Nash equilibrium and the non-negativity of the edge weights, we obtain for every $i \in V$,
$u_i(s) \geq u_i(s_{-i},s_i^*) \geq q_i(s_i^*).$

Using that the sum of the weights of all satisfied edges in $s^*$ is at most the sum of all edge weights, we obtain 
\begin{eqnarray}
u(s^*)& \leq &\sum_{i \in V} q_i(s_i^*) + \sum_{e = \{i,j\} \in E} w_{ij} 
\le  \sum_{i \in V} u_i(s) +  \left(1 + \bar{\alpha}\right) \sum_{\{i,j\} \in E} \min\{u_i(s),u_j(s)\}. \nonumber 
\end{eqnarray}

If we can find a value $M$ such that
\begin{equation}\label{eq:density}
\sum_{\{i,j\} \in E} \min\{u_i(s),u_j(s)\} \leq M\cdot \sum_{i \in V} u_i(s)
\end{equation}
then it follows that $u(s^*) \leq \left(1 + \left(1 + \bar{\alpha} \right)\cdot M\right) u(s)$.
We show that $M = \max_{S \subseteq V} \{|E[S]|/|S|\}$ satisfies (\ref{eq:density}). 

Let $N(i) = \sset{j \in V}{\set{i, j} \in E}$ be the set of neighbors of $i$. Define
$$
m_i = \big| \sset{ j \in N(i)}{u_i(s) < u_j(s) \text{ or } (u_i(s) = u_j(s) \text{ and } i < j)}\big|
$$
and note that $\sum_{i \in V} m_i = |E|$. We can assume without loss of generality that $\sum_{i \in V} u_i(s) = 1$, since the expression in (\ref{eq:density}) is invariant under multiplication with a constant positive scalar. Moreover, the players may be renamed such that $u_1(s) \leq u_2(s) \leq \dots \leq u_n(s)$. 

We continue by showing that $M$ is an upper bound for the linear program below (in which $u_i = u_i(s)$ and the $m_i$ are considered constants).
$$
\begin{array}{ll@{\quad}l@{\quad}l}
\max  & 
\sum_{i \in V} u_i m_i &
%\displaystyle\sum_{\{i,j\} \in E} u_i m_i \\
\text{s.t.} & \displaystyle u_1 + u_2 + \dots + u_n = 1, \quad 0 \leq u_1 \leq u_2 \leq \dots \leq u_n
%\text{s.t.} & \displaystyle u_1 + u_2 + \dots + u_n = 1 \\
%& & & \displaystyle 
\end{array}
$$
The dual of this program is given by
$$
\begin{array}{ll@{\quad}l@{\quad}l}
\min  &  z &
\text{s.t.} &   - \pi_{i} + \pi_{i+1} + z = m_i, \ \ \ i = 1,\dots,n-1, \quad -\pi_n + z = m_n \\
%& & &  -\pi_n + z = m_n \\
& & &  \pi_i \geq 0, \ \ \ i = 1,\dots,n, \quad z \in \R \\
%& & &   z \in \R
\end{array}
$$
We now construct a feasible dual solution. Set
$$
z^* = \max_{l \in V} \left\{ \frac{\sum_{i = l}^{n-1} m_i}{n - l}\right\}.
$$
We will often use that
$
(n - l)z^* \geq \sum_{i = l}^{n-1} m_i
$
for any fixed $l$. 
In particular, with $l = n-1$, we find $z^* \geq m_n$, so that  $\pi_n^* :=  z^* - m_n \geq 0$. Then we define
$
\pi_{n-1}^* := \pi_n^* + z^* - m_{n-1} = 2z^* - (m_{n-1} + m_n) \geq 0.
$
Using induction it then easily follows that
$
\pi_{i}^* := \pi_{i+1}^* + z^* - m_{i} \geq 0
$
for all $i = 1,\dots,n-2$ as well. We have constructed a feasible dual solution with objective function value $z^*$. Using weak duality it follows that for any feasible primal solution $u = (u_1,\dots,u_n)$, we have
$$
\sum_{\{i,j\} \in E} u_i m_i \leq  \max_{l \in V} \left\{ \frac{\sum_{i = l}^{n-1} m_i}{n - l}\right\} \leq \max_{S \subseteq V} \left\{\frac{|E[S]|}{|S|}\right\},
$$
since the term in middle is precisely the density of the induced subgraph on the nodes $l,\dots,n$. This completes the upper bound proof.\medskip

\noindent We continue with showing tightness, already for coordination games.
Let $G = (L \cup R, E)$ be a complete bipartite graph between node-sets $L$ and $R$, with $|L| = l$ and $|R| = r$, and assume that all edges in $E$ are coordination edges.
 We show tightness using a weighted Shapley distribution rule. (That is, for any value of $\max_{\{i,j\} \in E} \bar{\alpha}_{ij}$, there is also some weighted Shapley distribution rule that attains this value.) The nodes in $L$ get a fixed weight $\gamma_l \geq 0$, and the nodes in $R$ get a fixed weight $\gamma_r \geq 0$. 

We define $C = A \cup B \cup \{c_0\}$ where $A$ contains colors $\{a_1,\dots,a_l\}$ and $B = \{b_1,\dots,b_r\}$. We give every player $i \in L$ an individual preference of $\gamma_{l}/(\gamma_l + \gamma_r)$ for colors $a_i$ and $c_0$, every player $j \in R$ an individual preference of $\gamma_{r}/(\gamma_l + \gamma_r)$ for colors $b_j$ and $c_0$, and set all other individual preferences to zero. All edge-weights are set to one. The strategy profile $s$ in which player $i \in L$ plays $a_i$, and $j \in R$ plays $b_j$ is a Nash equilibrium with
$
u(s) = l \cdot \gamma_{l}/(\gamma_l + \gamma_r) + r \cdot \gamma_{r}/(\gamma_l + \gamma_r).
$
The strategy profile $s^*$ in which every player plays color $c_0$ is clearly a social optimum, with cost
$
u(s^*) = l \cdot \gamma_{l}/(\gamma_l + \gamma_r) + r \cdot \gamma_{r}/(\gamma_l + \gamma_r) + r \cdot l
$
It then follows that
$$
\frac{u(s^*)}{u(s)} = 1 + \frac{r \cdot l}{l \cdot \gamma_{l}/(\gamma_l + \gamma_r) + r \cdot \gamma_{r}/(\gamma_l + \gamma_r)}.
$$
By letting $r \rightarrow \infty$, we find a lower bound of $1 + l\cdot(1 + \gamma_l/\gamma_r)$. Note that for $l$ and $r$ fixed, the densest subgraph is the whole graph and has density $lr/(l+r)$ which converges to $l$ as $r \rightarrow \infty$.
\end{proof} \medskip

We use our topological bound to derive deterministic bounds on the Price of Anarchy for two special cases of clustering games. Note that these bounds cannot be deduced from \cite{Anshelevich2014,Feldman2015}.

\begin{corollary}[Planar clustering games]\label{cor:planar}
Let $\Gamma = (G,c,\alpha,w,q)$ be a symmetric clustering game on a planar graph $G$ with $\alpha > 0$.
Then $\text{PoA}(\Gamma) \leq 4 + 3\bar\alpha$. 
\end{corollary}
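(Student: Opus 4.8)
The plan is to combine the density bound of Theorem~\ref{thm:poa} with the classical edge bound for planar graphs coming from Euler's formula. Theorem~\ref{thm:poa} gives $\text{PoA}(\Gamma) \le 1 + (1 + \bar\alpha)\,\msd{G}$, so it suffices to show $\msd{G} \le 3$ whenever $G$ is planar; plugging this in yields $\text{PoA}(\Gamma) \le 1 + 3(1+\bar\alpha) = 4 + 3\bar\alpha$.

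To bound $\msd{G} = \max_{S \subseteq V}\{|E[S]|/|S|\}$, I would first observe that any induced subgraph $G[S]$ of a planar graph is again (simple and) planar. Then, for $|S| = k \ge 3$, Euler's formula implies $|E[S]| \le 3k - 6$, hence $|E[S]|/|S| \le 3 - 6/k < 3$. For $|S| \le 2$ one has $|E[S]| \le 1$, so $|E[S]|/|S| \le 1/2 < 3$. Taking the maximum over all $S$ therefore gives $\msd{G} \le 3$ (in fact $\msd{G} < 3$ for any finite $G$, but the non-strict inequality is all that is needed).

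There is essentially no obstacle here: the argument is a direct substitution into Theorem~\ref{thm:poa}, with the only minor care being to dispose of the degenerate cases $|S| \le 2$ where the $3k-6$ bound does not apply. I would write the corollary's proof in two or three lines along exactly these lines.
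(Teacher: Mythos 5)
Your proposal is correct and matches the paper's proof exactly: both invoke Theorem~\ref{thm:poa}, note that induced subgraphs of planar graphs are planar, and use Euler's formula to bound $\msd{G}$ by $3$. Your extra care with the cases $|S|\le 2$ is a harmless elaboration of the same argument.
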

\begin{proof}%{\textit{Proof of Corollary \ref{cor:planar}.}}
By Euler's formula, $|E(H)|/|V(H)| \leq 3$ for any planar graph $H$. 
Further, any induced subgraph $H$ of a planar graph $G$ is again planar. 
Using this in Theorem~\ref{thm:poa} proves the claim. 
\end{proof}

\begin{corollary}[Equal-split coordination games]\label{cor:poa_unweighted}
Let $G$ be a given undirected graph, and let $\mathcal{G}_G$ be the set of all symmetric coordination games  $\Gamma = (G,c,\vec{1},w,q)$ with equal-split distribution rule on $G$. Then 
$
\text{PoA}(\mathcal{G}_G) = 1 + 2\msd{G}.
%\max_{S \subseteq V}\left\{\frac{|E[S]|}{|S|}\right\}.
%\text{PoA}(\Gamma) \le 1 + 2 \msd{G}
$
\end{corollary}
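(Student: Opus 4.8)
The plan is to prove the two inequalities separately, the upper bound being immediate and the lower bound requiring a tailored construction.

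\medskip\noindent\textbf{Upper bound.} For the equal-split rule $\alpha = \mathbf{1}$ we have $\alpha_{ij} = \alpha_{ji} > 0$ on every edge (so $\alpha > \mathbf{0}$) and maximum disparity $\bar\alpha = 1$. Hence Theorem~\ref{thm:poa} applies to every $\Gamma \in \mathcal{G}_G$ and gives $\text{PoA}(\Gamma) \le 1 + (1+1)\msd{G} = 1 + 2\msd{G}$, so $\text{PoA}(\mathcal{G}_G) \le 1 + 2\msd{G}$.

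\medskip\noindent\textbf{Lower bound.} Here I would exhibit one game in $\mathcal{G}_G$ achieving the bound exactly, by localising the tightness construction of Theorem~\ref{thm:poa} to a densest subgraph. Fix $S^* \subseteq V$ with $|E[S^*]|/|S^*| = \msd{G}$ and write $n^* = |S^*|$, $m^* = |E[S^*]|$ (the case with no edges is trivial, so assume $m^* \ge 1$). Take $c$ large enough to contain one common colour $c_0$ and a private colour $a_i$ for each $i \in S^*$, all distinct. Set $w_e = 1$ for $e \in E[S^*]$ and $w_e = 0$ on all other edges, and let $q_i(a_i) = q_i(c_0) = \tfrac12$ for $i \in S^*$ with all remaining preferences zero; players outside $S^*$ and the weight-$0$ edges are then inert. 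I claim that the profile $s$ in which every $i \in S^*$ plays $a_i$ (others arbitrary) is a pure Nash equilibrium with $u(s) = n^*/2$, while the profile $s^*$ in which every $i \in S^*$ plays $c_0$ is a social optimum with $u(s^*) = n^*/2 + m^*$; the ratio is then $1 + 2m^*/n^* = 1 + 2\msd{G}$, which together with the upper bound yields the stated equality.

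\medskip\noindent\textbf{Verification sketch and main obstacle.} Optimality of $s^*$ is the easy direction: in any profile the total preference contribution is at most $\sum_{i\in S^*}\tfrac12 = n^*/2$ and the total edge contribution is at most $\sum_{e\in E[S^*]} w_e = m^*$ (only those edges carry weight), and $s^*$ attains both simultaneously. The delicate point — the step I would treat most carefully — is the Nash condition for $s$: a player $i \in S^*$ currently collects $\tfrac12$ from her preference and nothing from edges, a deviation to $c_0$ again yields $\tfrac12$, a deviation to a neighbour's private colour $a_j$ satisfies exactly the one unit-weight edge $\{i,j\}$ for a share of $w_{ij}/2 = \tfrac12$, and any other colour yields $0$. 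So every unilateral move is at best break-even; it is precisely the choice of individual preference equal to the edge share $\tfrac12$ that makes this so and renders the bound exactly tight rather than merely asymptotic. Players outside $S^*$ have constant utility $0$ and are trivially at equilibrium.
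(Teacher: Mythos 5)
Your proof is correct and follows essentially the same route as the paper: the upper bound is quoted from Theorem~\ref{thm:poa} with $\bar\alpha = 1$, and the lower bound localises the tightness construction to a densest subgraph $S^*$ with a common colour $c_0$, private colours, and individual preferences tuned to equal the per-edge share (your weights $1$ and preferences $\tfrac12$ are just a rescaling of the paper's weights $2$ and preferences $1$). Your verification of the Nash condition and of the optimality of $s^*$ is sound.
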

We emphasize that the bound in Corollary \ref{cor:poa_unweighted} is tight on \emph{every} fixed graph topology $G$, rather than only in the \emph{value} of $\rho(G)$.

\begin{proof}[Proof of Corollary \ref{cor:poa_unweighted}]
The upper bound follows directly from Theorem \ref{thm:poa}. 
We prove the lower bound by constructing an instance of a coordination game as follows: Let $S \subseteq V$ be arbitrary and consider the induced subgraph on $S$. Assume without loss of generality that $S = \{1,\dots,\sigma\}$ with $\sigma = |S|$. Define the set of colors as $C = \{c_1,\dots,c_\sigma\} \cup \{c_0\}$. We give every player $i \in S$ an individual preference of one for colors $c_i$ and $c_0$ and zero for all other colors. Further, the individual preferences of all nodes in $V \setminus S$ are set to zero. The weight of all edges in $E[S]$ is set to $2$ and the weight of all edges in $E \setminus E[S]$ is set to zero. 

Consider a strategy profile $s$ in which every player $i \in S$ chooses color $c_i$ and every player $i \notin S$ chooses an arbitrary color. Then $s$ is a Nash equilibrium with social welfare $u(s) = |S|$. On the other hand, the strategy profile $s^*$ in which every player chooses color $c_0$ is a social optimum with social welfare $u(s^*) = |S| + 2|E[S]|$. This implies that $u(s^*)/u(s) = 1 + 2 |E[S]|/|S|$. The result now follows by choosing $S$ as a subset of maximum subgraph density. 
\end{proof}

It is known that the Price of Anarchy of anti-coordination games is $2$ (see, e.g., \cite{Hoefer2007}), which is not reflected by our bound in Theorem \ref{thm:poa}. Intuitively, this suggests that a large Price of Anarchy is caused by the coordination edges of the graph. Theorem~\ref{thm:poa_clustering} reveals that this intuition is correct: it shows 
that the maximum subgraph density with respect to the \emph{coordination edges only} is the determining topological parameter.

\begin{restatable}[Refined density bound]{theorem}{poaclustering}\label{thm:poa_clustering}
Let $\Gamma = (G,c,\vec{1},w,q)$ be a symmetric clustering game with equal-split distribution rule. 
Then
$
%1 + 2\cdot \max_{S \subseteq V} \left\{\frac{|E_c[S]|}{|S|}\right\} \leq 
%\text{PoA}(\mathcal{G}_{G}) \leq 5 + 2\cdot \max_{S \subseteq V} \left\{\frac{|E_c[S]|}{|S|}\right\}
\text{PoA}(\Gamma) \leq 5 + 2 \msd{G[E_c]},
%max_{S \subseteq V} \{ |E_c[S]|/|S|\},
$
where $G[E_c]$ is the subgraph induced by the coordination edges $E_c$.  
\end{restatable}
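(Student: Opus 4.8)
The plan is to refine the charging argument behind Theorem~\ref{thm:poa} by splitting the edge set into its coordination part $E_c$ and its anti-coordination part $E_a$, and estimating the corresponding contributions to $u(s^*)$ separately, so that only $E_c$ feeds into the density term. Fix a Nash equilibrium $s$ and a social optimum $s^*$ and write
\[
u(s^*) = \sum_{i\in V} q_i(s^*_i) + \sum_{\{i,j\}\in E_c:\, s^*_i = s^*_j} w_{ij} + \sum_{\{i,j\}\in E_a:\, s^*_i \neq s^*_j} w_{ij}.
\]
The first sum is at most $\sum_{i\in V} u_i(s) = u(s)$, using $u_i(s)\ge u_i(s_{-i},s^*_i)\ge q_i(s^*_i)$ exactly as in the proof of Theorem~\ref{thm:poa}.

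For the coordination sum I would reuse the Nash argument for coordination edges: for $\{i,j\}\in E_c$ with $u_i(s)\le u_j(s)$, the Nash condition (or the fact that the edge is already satisfied in $s$) gives $u_i(s)\ge \tfrac12 w_{ij}$, since the equal-split rule yields the share $\tfrac12$; hence $w_{ij}\le 2\min\{u_i(s),u_j(s)\}$. Summing over $E_c$ only, the coordination sum is at most $2\sum_{\{i,j\}\in E_c}\min\{u_i(s),u_j(s)\}$. I would then run the linear-programming/duality estimate from the proof of Theorem~\ref{thm:poa} with the edge set $E_c$ in place of $E$: the global ordering $u_1(s)\le\cdots\le u_n(s)$ is unchanged, the coefficients $m_i$ are replaced by the coordination-degrees $m^c_i$ (which still sum to $|E_c|$), and the feasible dual value it produces is bounded by $\msd{G[E_c]}$. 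This gives $\sum_{\{i,j\}\in E_c}\min\{u_i(s),u_j(s)\}\le \msd{G[E_c]}\cdot u(s)$, so the coordination sum is at most $2\,\msd{G[E_c]}\,u(s)$.

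The remaining, and conceptually crucial, point is to bound the anti-coordination sum by a constant multiple of $u(s)$, with no dependence on the density of $G[E_a]$. Charging an anti-coordination edge through $\min\{u_i(s),u_j(s)\}$ would reintroduce $\msd{G[E_a]}$, so instead I would import the classical ``price of anarchy equals $2$'' argument for cut games: for every player $i$ and every color $c'\in[c]$, the Nash condition together with the non-negativity of $q_i$ and of the coordination contributions gives $u_i(s)\ge u_i(s_{-i},c')\ge \tfrac12\sum_{\{i,j\}\in E_a:\, s_j\neq c'} w_{ij}$; averaging these $c\ge 2$ inequalities over $c'$ yields $\sum_{j:\,\{i,j\}\in E_a} w_{ij}\le \tfrac{2c}{c-1}\,u_i(s)$, and summing over $i$ (each edge counted twice) gives $\sum_{\{i,j\}\in E_a} w_{ij}\le \tfrac{c}{c-1}\,u(s)\le 2u(s)$. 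A fortiori the anti-coordination sum in $u(s^*)$ is at most $2u(s)$.

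Adding the three estimates gives $u(s^*)\le\bigl(1+2\,\msd{G[E_c]}+2\bigr)u(s)$, which is at most $\bigl(5+2\,\msd{G[E_c]}\bigr)u(s)$, as claimed; in particular the bound degenerates to a constant when $E_c=\emptyset$, matching the intuition that the coordination edges drive the inefficiency. I expect the anti-coordination step to be the main obstacle: one must resist charging anti-coordination edges the way coordination edges are charged (which reintroduces a density term) and instead exploit that a player can always anti-coordinate with the ``average'' color, which is exactly what decouples $E_a$ from the topology. A secondary point is to verify that the LP/duality step of Theorem~\ref{thm:poa} goes through after restricting to $E_c$; it does, since restricting the edge set leaves the sorted order of the $u_i(s)$ intact and only replaces the relevant induced-subgraph densities by their $G[E_c]$ counterparts.
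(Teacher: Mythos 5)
Your proof is correct and follows the paper's argument almost exactly: the same three-way decomposition of $u(s^*)$ into preferences, coordination weight, and anti-coordination weight, the same charging of a coordination edge to $2\min\{u_i(s),u_j(s)\}$, and the same reuse of the LP/duality step of Theorem~\ref{thm:poa} restricted to $E_c$. The one place you deviate is the anti-coordination term, and your version is actually sharper: the paper sums the Nash inequalities for just two arbitrary distinct colors, concluding $\sum_{e\in E_a} w_e \le 4\,u(s)$ and hence the constant $5$, whereas your averaging over all $c$ colors (together with counting each edge from both endpoints) gives $\sum_{e\in E_a} w_e \le \tfrac{c}{c-1}u(s) \le 2\,u(s)$, which yields $\text{PoA}(\Gamma) \le 3 + 2\msd{G[E_c]}$ — a strict improvement on the stated bound, and one that narrows (though does not close) the gap with the known tight bound of $2$ for pure anti-coordination games that the paper itself remarks on after the theorem.
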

\begin{proof}
The proof is a modification of the proof of Theorem \ref{thm:poa}.
Let $s$ be a Nash equilibrium and $s^*$ a socially optimal strategy profile. For notational convenience, we write $u_i = u_i(s)$ for $i \in V$.  
Moreover, for a strategy profile $t$, we let $E_c(t)$ be the set of all coordination edges satisfied in $t$ and $E_a(t)$ the set of anti-coordination edges that are satisfied in $t$. 

Now, fix some coordination edge $\{a,b\} \in E$, and assume without loss of generality that $u_a \leq u_b$. Then
\begin{equation}\label{eq:clustering1}
u_a(s) \geq u_i(s_{-a},s_b) \geq \frac{1}{2}w_{ab},
\end{equation}
where $(s_{-a},s_b)$ is the strategy profile in which player $a$ deviates to the color of player $b$ and all others play their strategy in $s$. 
Rewriting gives $w_{ab} \leq 2 u_a$.
Moreover, using the non-negativity of the weights $w$ and the definition of a Nash equilibrium, we have for every $i \in V$ that
\begin{equation}\label{eq:clustering2}
u_i(s) \geq u_i(s_{-i},s_i^*) \geq q_i(s_i^*).
\end{equation}
Finally, note that for two arbitrary colors $l_1$ and $l_2$, it follows that
$$
2 u_i(s) \geq u_i(s_{-i},l_1) + u_i(s_{-i},l_2) \geq \frac{1}{2} \sum_{j  : \{i,j\} \in E_a } w_{ij} 
$$
using the Nash condition twice, 
since every anti-coordination edge adjacent to $i$ becomes satisfied for at least one of the two deviations. % Note that $l$ exists, because $|C| \geq  \Delta(G)$. 
This implies that 
\begin{equation}\label{eq:clustering3}
4 \cdot \sum_{i \in V} u_i(s) \geq \sum_{e \in E_a} w_{ij}.
\end{equation}
Combining (\ref{eq:clustering1}), (\ref{eq:clustering2}) and (\ref{eq:clustering3}), we find
\begin{eqnarray}
u(s^*) & \leq &  \sum_{i \in V} q_i(s_i^*) + \sum_{\{i,j\} \in E_c} w_{ij} + \sum_{\{i,j\} \in E_a } w_{ij}  
\leq  5 \cdot \sum_{i \in V} u_i +  \sum_{\{i,j\} \in E_c} 2 \cdot \min\{u_i,u_j\} \nonumber \\
%&\leq & 5 \cdot \sum_{i \in V} u_i +  \sum_{\{i,j\} \in E_c} 2 \cdot \min\{u_i,u_j\} \nonumber \\
& \leq & 5 u(s) + 2\cdot \max_{S \subseteq V} \left\{\frac{|E_{c}[S]|}{|S|}\right\} u(s),
\end{eqnarray}
where the final step follows from similar arguments as in the proof of Theorem~\ref{thm:poa}. 
\end{proof}

Using a similar construction as in the proof of Corollary \ref{cor:poa_unweighted} we can also establish a lower bound of $1 + 2\max_{S \subseteq V} \left\{|E_c[S]|/|S|\right\}$.

Note that for anti-coordination games we obtain an upper bound of $5$ which is inferior to the known (tight) bound of $2$. It would be interesting to see whether our topological bound in Theorem \ref{thm:poa_clustering} can be improved to match this bound.

\subsection{Price of Anarchy for Random Coordination Games}
We now turn to our bounds for random coordination games. 
Recall that for random graphs we consider equal-split distribution rules only. 
We first show that for sparse random graphs the Price of Anarchy is constant 
with high probability.

\begin{corollary}[Sparse random coordination games]\label{cor:sparse}
Let $d > 0$ be a constant. 
Let $\mathcal{G}_{G_n}$ be the set of all symmetric coordination games $\Gamma = (G_n,c,\vec{1},w,q)$ on graph $G_n \sim G(n,d/n)$ with equal-split distribution rule.
Then there is a constant $\beta= \beta(d)$ such that $\text{PoA} \left(\mathcal{G}_{G_n}\right) \rightarrow \beta$.
\end{corollary}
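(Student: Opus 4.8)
The plan is to reduce the statement to a concentration property of the maximum subgraph density $\msd{G_n}$ of the sparse random graph $G_n \sim G(n, d/n)$, and then appeal to what is known about this quantity. By Corollary~\ref{cor:poa_unweighted}, for \emph{every} fixed graph $G$ one has $\text{PoA}(\mathcal{G}_G) = 1 + 2\msd{G}$ exactly, with no dependence on the number of colors; applying this with $G = G_n$ gives $\text{PoA}(\mathcal{G}_{G_n}) = 1 + 2\msd{G_n}$. Hence it suffices to exhibit a constant $\rho^* = \rho^*(d)$ such that $\msd{G_n} \to \rho^*$ in probability, i.e. $\mathbb{P}_{G_n \sim G(n,d/n)}\{\,|\msd{G_n} - \rho^*| \le \varepsilon\,\} \ge 1 - o(1)$ for every $\varepsilon > 0$; setting $\beta(d) := 1 + 2\rho^*(d)$ then yields $\text{PoA}(\mathcal{G}_{G_n}) \to \beta(d)$, because $|\text{PoA}(\mathcal{G}_{G_n}) - \beta(d)| = 2\,|\msd{G_n} - \rho^*(d)|$.

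To handle $\msd{G_n}$ I would first use that density is ``mediant''-like: for disjoint vertex sets the ratio $|E[S]|/|S|$ never exceeds the larger of the two pieces, so $\msd{G} = \max_{C}\msd{C}$ over the connected components $C$ of $G$, and it is enough to understand the densest subgraph inside a single component. For $d \le 1$ the random graph is (sub)critical: by standard facts about $G(n,d/n)$, with high probability every component is a tree or unicyclic (so $\msd{G_n} \le 1 + o(1)$) while the largest component has a number of vertices tending to infinity (so $\msd{G_n} \ge 1 - o(1)$); hence $\rho^*(d) = 1$ and $\beta(d) = 3$ in this whole range. For $d > 1$ the components outside the giant form, by the duality principle, a subcritical-like graph and again contribute density at most $1 + o(1)$, so the problem localizes to the giant component; there I would control $\max_S |E[S]|/|S|$ via the core structure, using that for $d$ past the relevant thresholds the $k$-cores of $G(n,d/n)$ have linear size with densities concentrating around explicit constants, and that a maximum-density subgraph is obtained from a suitable core by peeling off low-degree vertices. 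This leads to an explicit constant $\rho^*(d) \ge 1$ around which $\msd{G_n}$ concentrates.

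The main obstacle is exactly this supercritical step: pinning down $\rho^*(d)$ and proving \emph{two-sided} concentration of $\msd{G_n}$ about it. In particular, a naive first-moment/union-bound over all vertex subsets does \emph{not} work — in the relevant regimes the expected number of dense subgraphs of a given order is dominated by rare events and tends to infinity — so one genuinely needs to exploit the component- and core-structure of $G(n,d/n)$ together with the associated (nontrivial, but known) concentration results for random graphs. Once such a concentration statement for $\msd{G_n}$ is in hand, the corollary follows from the purely formal reduction in the first paragraph.
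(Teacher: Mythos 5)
Your proposal is correct and follows essentially the same route as the paper: reduce via the exact identity $\text{PoA}(\mathcal{G}_{G_n}) = 1 + 2\msd{G_n}$ from Corollary~\ref{cor:poa_unweighted} and then invoke concentration of the maximum subgraph density of $G(n,d/n)$ around a constant. The only difference is that the paper simply cites Anantharam and Salez \cite{Anantharam2016} for that concentration statement (with \cite{Hajek1990} for approximations of the constant), whereas you sketch how one might establish it via the component and core structure --- a sketch that is not needed given the cited result, and which you correctly flag as the nontrivial ingredient.
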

\begin{proof}
The maximum subgraph density of a random graph $G_n$ approaches a constant $\beta = \beta(d)$ with high probability \cite{Anantharam2016} (see \cite{Hajek1990} for approximations of this constant).
Combining this with the bound in Corollary \ref{cor:poa_unweighted} proves the claim. 
 \end{proof}

As we show in Theorem~\ref{thm:random_graph}, the result of Corollary \ref{cor:sparse} does not hold for sufficiently dense random graphs if the number of available colors grows large.

\begin{theorem}[Dense random coordination games]\label{thm:random_graph}
Let $0 < d \leq 1$ be a constant and let $(c_n)_{n \in \N} \rightarrow \infty$ be a sequence of available colors. 
Let $\mathcal{G}_{G_n}(c_n)$ be the set of all symmetric coordination games $\Gamma = (G_n,c_n,\vec{1}, w,\mathbf{0})$ on graph $G_n \sim G(n,d)$ with $c_n$ colors, equal-split distribution rule and no individual preferences.
Then there is a constant $\beta = \beta(d)$ such that $\text{PoA} \left(\mathcal{G}_{G_n}(c_n)\right) \geq \beta c_n$.
\end{theorem}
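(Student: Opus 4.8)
The plan is to use the ``adaptive adversary'' freedom: for almost every realization of $G_n \sim G(n,d)$ I will exhibit one coordination game $\Gamma \in \mathcal{G}_{G_n}(c_n)$ with $\text{PoA}(\Gamma) \geq \beta c_n$. Throughout I may assume $c_n \leq \lfloor n/2 \rfloor$ (for larger $c_n$ one runs the construction below with $\lfloor n/2\rfloor$ colours; by Corollary~\ref{cor:poa_unweighted} the Price of Anarchy of dense random coordination games is $O(n)$ anyway, so nothing is lost). Fix \emph{in advance} a pairing of the first $2c_n$ vertices into pairs $P_j = \{a_j,b_j\}$, $j \in [c_n]$, and let $I = \{\, j \in [c_n] : \{a_j,b_j\} \in E(G_n)\,\}$ be the (random) set of \emph{good} pairs; write $A$ for the set of $2|I|$ vertices lying in good pairs. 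Since $|I|$ is binomial with mean $d c_n \to \infty$, a Chernoff bound gives $|I| \geq \tfrac{d}{2} c_n$ with high probability.

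Next I define $\Gamma$ on $G_n$ — all edges are coordination edges, individual preferences are zero, the split is equal — by the weight function $w(\{a_j,b_j\}) = 2$ for $j \in I$, $w(e) = 1$ for every remaining edge $e$ of $G_n[A]$, and $w(e) = 0$ for every edge having an endpoint outside $A$. The intended ``bad'' profile $s$ assigns colour $j$ to both $a_j$ and $b_j$ for $j \in I$ and an arbitrary colour to all other vertices. Checking that $s$ is a pure Nash equilibrium is a short case analysis: a vertex $a_j$ ($j \in I$) has utility $1$ from its weight-$2$ monochromatic pair edge, and deviating to any colour $k$ gains at most $\tfrac12(1+1) = 1$, since $a_j$ has at most two neighbours in the pair $P_k$ and those (cross) edges have weight $1$; vertices outside $A$ touch only weight-$0$ edges and are indifferent. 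The only monochromatic edges under $s$ are the $|I|$ pair edges, so $u(s) = 2|I|$; the all-equal profile $s^*$ satisfies every edge, so $u(s^*) = 2|I| + |E_{\mathrm{cross}}|$, where $E_{\mathrm{cross}}$ is the set of weight-$1$ edges of $G_n[A]$.

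The remaining task is to show $|E_{\mathrm{cross}}|$ is large, and this is the step that I expect to need the most care: $A$ is itself a function of $G_n$, so $G_n[A]$ is \emph{not} an Erd\H{o}s--R\'enyi graph and one cannot apply concentration naively. The point is that the pairing was fixed beforehand, so the event $j \in I$ depends only on the coin flip of the slot $\{a_j,b_j\}$, whereas the potential weight-$1$ edges occupy a disjoint set of slots; hence, conditioned on $I$ (equivalently, on $A$), $|E_{\mathrm{cross}}|$ is binomial with parameters $\binom{2|I|}{2} - |I|$ and $d$. A Chernoff bound — together with a union bound over the at most $2^{c_n}$ values of $I$, which is harmless since $d|I|^2 \gg c_n$ — then yields $|E_{\mathrm{cross}}| \geq d|I|^2$ with high probability. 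Combining the pieces,
\[
\text{PoA}(\Gamma) \;\geq\; \frac{u(s^*)}{u(s)} \;=\; 1 + \frac{|E_{\mathrm{cross}}|}{2|I|} \;\geq\; 1 + \frac{d|I|}{2} \;\geq\; 1 + \frac{d^2}{4}\, c_n ,
\]
so the theorem follows with $\beta = d^2/4$. An alternative is to build the pairs from the edges of a perfect matching of $G_n$ (whose existence with high probability is classical), but then one must additionally argue that conditioning on the matching does not spoil the edge density of $G_n[A]$ — which is precisely the subtlety that the fixed-pairing approach above sidesteps.
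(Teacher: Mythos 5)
Your proof is correct, and its deterministic core — weight $2$ on the matching edges, weight $1$ on the remaining edges induced by the matched vertices, weight $0$ elsewhere, with the Nash profile that assigns each matched pair its own colour and the case check that a deviation gains at most $\tfrac12(1+1)=1$ — is the same construction the paper uses. Where you genuinely diverge is in how the ``matching plus dense induced subgraph'' structure is extracted from $G(n,d)$. The paper fixes the deterministic vertex set $W_n=\{1,\dots,\lceil c_n/4\rceil\}$ and invokes two facts: a Chernoff bound giving $\Omega(c_n^2)$ edges in $G_n[W_n]$, and the classical theorem that dense Erd\H{o}s--R\'enyi graphs contain a perfect matching with high probability, applied to $G_n[W_n]$; because $W_n$ is fixed in advance, the density statement concerns a deterministic set and no conditioning issue arises. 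You instead fix a pairing of the first $2c_n$ vertices in advance and keep only the pairs that happen to be edges, which replaces the perfect-matching theorem by an elementary binomial concentration bound, at the price of making the matched set $A$ random — a subtlety you correctly neutralize by observing that the pair slots and the cross slots are disjoint, so that conditionally on $I$ the cross edges are still independent Bernoulli($d$), after which Chernoff plus a union bound (or simply averaging over $I$) finishes the job. Both routes are valid; yours is somewhat more self-contained. One shared blemish, not specific to your write-up: when $c_n$ grows faster than $n$ the construction only delivers $\Omega(n)$ rather than $\Omega(c_n)$, and indeed $\text{PoA}\le 1+2\rho(G)=O(n)$ always, so the theorem implicitly requires $c_n=O(n)$; your cap at $\lfloor n/2\rfloor$ colours mirrors the paper's own treatment of that regime (``if $c_n\ge 4n$, consider $W_n=\{1,\dots,n\}$'') and should not be counted against your argument.
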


We note that this lower bound holds even for coordination games without individual preferences (as studied in \cite{Feldman2015}). Basically, this bound implies that for dense graph topologies we cannot significantly improve upon the Price of Anarchy bound of $c$ by \cite{Anshelevich2014,Feldman2015}, even if we randomize the graph topology.

\begin{proof}[Proof of Theorem \ref{thm:random_graph}]
We first construct a deterministic instance $\Gamma$ with Price of Anarchy $\Omega(c_n)$ and then show that we can embed this construction into a random graph with high probability. 

%\gsrem{changed $\mu$ to $M$ here because of overload}
Consider a graph $G = (V,E)$ and let $c$ be the number of available colors. 
Let $M = \{e_1,\dots,e_q\} \subseteq E$ be a matching of size at most $c$. Let $V_{M}$ be the set of nodes which are matched in $M$. 
Define the weight of an edge $e \in E$ as $w(e) =  2$ if $e \in M$, $w(e) = 1$ if precisely one of $e$'s endpoints is matched in $M$, and $w(e) = 0$ otherwise. 

Consider the strategy profile $s$ in which the nodes adjacent to $e_i$ play color $i$, for $i = 1,\dots,q$. Note that this is possible because $q \leq c$ by assumption. All other nodes play an arbitrary color; these nodes are irrelevant as all the edges that they are adjacent to have weight zero. In a social optimum $s^*$ all players choose a common color. It follows that $\text{PoA}(\Gamma) \geq |E[V_{M}]|/(2q)$, 
where $|E[V_{M}]|$ is the number of edges in the induced subgraph of $V_{M}$. Note that all these edges have weight at least one.

Now, let $G_n = (V_n,E_n) \sim G(n,d)$ and assume without loss of generality that $V_n = \{1,\dots,n\}$. We claim that with high probability the induced subgraph on nodes $W_n = \{1,\dots,\lceil c_n/4\rceil \}$ contains both $\Omega(c_n^2)$ edges and a perfect matching (if $\lceil c_n/4\rceil$ is odd, we consider the first $\lceil c_n/4\rceil + 1$ nodes). (One may focus on any set of $\lceil c_n/4\rceil$ nodes. The important thing to note is that we need a set of nodes with many edges on its induced subgraph \emph{and} a perfect matching (it is not sufficient to find two different sets each satisfying one of these properties). Moreover, if $c_n \geq 4n$, we consider $W_n = \{1,\dots,n\}$ and then the same argument works.)

The first claim follows from standard arguments. 
Note that 
$$
\mu = \mathbb{E}\{E_n[W_n]\} = d \binom{\lceil c_n/4\rceil}{2} = \Omega(c_n^2).
$$
Using Chernoff's bound, it follows that
$
\mathbb{P}\{E_n[W_n] < \mu/2)\} \leq \exp(-\mu/8) = \exp(-\Omega(c_n^2)/8) \rightarrow 0
$
as $n \rightarrow \infty$ as $(c_n) \rightarrow \infty$.	 
The second claim relies on the following result (see, e.g., \cite{Frieze2015}): 
For every fixed $0 < d \leq 1$ it holds that
$
\lim_{n \rightarrow \infty} \mathbb{P}_{G_n \sim G(n,d)}\{G_n \text{ contains a perfect matching}\} = 1.
$
By applying this result to the induced subgraph on $W_n$ and using that $c_n$ approaches infinity as $n \rightarrow \infty$, the claim follows. (Note that here we implicitly use that the intersection of two probabilistic events which occur with high probability also occurs with high probability.)

Combining this with the deterministic bound on the Price of Anarchy derived above concludes the proof. 
\end{proof}

\section{Convergence of Best-Response Dynamics}\label{sec:existence}

In this section, we derive our characterization results for the convergence of best-response dynamics in symmetric clustering games and for the existence of pure Nash equilibria in symmetric coordination games. 
(Recall that best-response dynamics are said to \emph{converge} if any sequence of player deviations, where in each step the deviating player chooses a most profitable deviation, converges in a finite number of steps to a (pure) Nash equilibrium.)
Basically, for symmetric clustering games our characterization shows that best-response dynamics are guaranteed to converge to a pure Nash equilibrium if and only if $\alpha$ is a generalized weighted Shapley distribution rule.
For the special case of symmetric coordination games with $c \geq 3$, we can further strengthen this characterization result and show that a pure Nash equilibrium is guaranteed to exist if and only if $\alpha$ is a generalized weighted Shapley distribution rule. This complements a results of Anshelevich and Sekar \cite{Anshelevich2014}.

\subsection{Symmetric Clustering Games}
We provide a characterization of distribution rules that guarantee the convergence of best-response dynamics in symmetric clustering games. 

\begin{theorem}[Best-response convergence]\label{thm:best_response_2}
Let $\mathcal{G}_{G,c,\alpha}$ be the set of all symmetric clustering games $\Gamma = (G,c,\alpha, w, q)$ on a fixed graph $G$ with $c$ common colors and distribution rule $\alpha$.
Then best-response dynamics are guaranteed to converge to a pure Nash equilibrium for every clustering game in $\mathcal{G}_{G,c,\alpha}$ if and only if $\alpha$ corresponds to a generalized weighted Shapley distribution rule.
\end{theorem}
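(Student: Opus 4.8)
**The plan is to prove both directions of the equivalence separately, with the "if" direction relying on a potential function argument and the "only if" direction on a cyclic-consistency obstruction.**

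For the \emph{if} direction, suppose $\alpha$ is a generalized weighted Shapley distribution rule, witnessed by a permutation $\sigma$ and weights $\gamma \in \R_{\ge 0}^V$. The plan is to exhibit a generalized ordinal potential (in fact, a weighted potential) for every game in $\mathcal{G}_{G,c,\alpha}$. Following the Shapley-value intuition, I would define a potential of the form $\Phi(s) = \sum_{i \in V} \frac{1}{\gamma_i} q_i(s_i) + \sum_{e = \{i,j\} \text{ satisfied in } s} \phi_e(s)$, where for an edge with both $\gamma_i, \gamma_j > 0$ the contribution is $w_e/(\gamma_i + \gamma_j)$ (mirroring the standard weighted Shapley potential for congestion games with two players per resource), and for an edge with, say, $\alpha_{ij} = 0$ (so $\sigma(i) < \sigma(j)$ and $j$ receives the full weight), the contribution is handled via a lexicographic/hierarchical potential indexed by $\sigma$ — essentially, players are processed in $\sigma$-order and an edge "belongs" fully to its later endpoint. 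One then checks that when player $i$ best-responds, the change in her utility $u_i$ equals $\gamma_i$ times the change in $\Phi$ (with the hierarchical part giving lexicographic improvement for the zero-split edges), so best-response dynamics strictly increase $\Phi$ and hence terminate. The anti-coordination edges are handled symmetrically since "satisfied" just flips the color condition; the preference terms $q_i$ are private and pose no difficulty. I would cite the analogous construction in \cite{Gopalakrishnan2013} and \cite{Chen2010} to keep the bookkeeping short.

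For the \emph{only if} direction, I would argue the contrapositive: if $\alpha$ is \emph{not} a generalized weighted Shapley distribution rule, construct a game in $\mathcal{G}_{G,c,\alpha}$ together with a cyclic sequence of best responses returning to the start with strictly increasing utility at each step — a best-response cycle — which precludes convergence. The combinatorial heart is a characterization: $\alpha$ fails to be generalized weighted Shapley precisely when the "preference ratios" are cyclically inconsistent, i.e. there is no way to choose weights $\gamma_i$ consistently across all edges (for the positive edges this is a multiplicative consistency condition $\prod (\alpha_{ij}/\alpha_{ji}) = 1$ around cycles, and for the zero-split edges an acyclicity/orientation condition). So I would first isolate a short witnessing cycle $v_1, v_2, \dots, v_m, v_1$ in $G$ along which consistency is violated (this is exactly the "cyclic consistency" obstruction à la \cite{Gopalakrishnan2013}), then use only the edges of this cycle — setting all other edge weights to zero and all individual preferences to zero — and choose the cycle edge weights and $c \ge 2$ colors so that a rotation of colors around the cycle forms a closed improving best-response walk. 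The zero-split case needs a slightly separate gadget where a player with $\alpha_{ij}=0$ on a cycle edge can be "pushed" indefinitely.

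The \textbf{main obstacle} I expect is the \emph{only if} direction — specifically, translating the algebraic failure of the generalized-weighted-Shapley conditions into a concrete finite improving best-response cycle on the \emph{fixed} graph $G$. The difficulty is twofold: (i) the obstruction could live on a long cycle or on a more intricate subgraph than a single cycle, so I must show a \emph{minimal} violating configuration is always a simple cycle (or a simple path for the zero-split case), which requires a careful case analysis splitting "all edges positive" from "some edge has a zero split"; and (ii) on that cycle I need the best responses to actually be \emph{best} responses (not merely improving), so the weights and the auxiliary colors must be tuned — likely using a large number of "dummy" colors so a deviating player's only attractive option is the intended one — and I must verify the cycle closes up with a net strict increase, which is where the violated multiplicative consistency $\prod_{k}(\alpha_{v_k v_{k+1}}/\alpha_{v_{k+1} v_k}) \ne 1$ is exactly what makes one full loop strictly profitable. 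I would lean on the parallel structure in \cite{Gopalakrishnan2013} to streamline (i), and handle (ii) by an explicit gadget on the cycle.
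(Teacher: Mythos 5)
Your overall strategy coincides with the paper's: the ``if'' direction via a (generalized weighted) potential inherited from the Shapley structure (the paper simply defers this to the resource-allocation-game machinery of \cite{Gopalakrishnan2013}), and the ``only if'' direction by extracting a cyclic-consistency violation --- an acyclicity condition on the zero-split structure plus the multiplicative condition $\prod_k \alpha_{v_kv_{k+1}}/\alpha_{v_{k+1}v_k}=1$ around cycles --- and turning it into a closed best-response walk on a single witnessing cycle with all other edge weights zeroed out. The paper isolates that witnessing cycle exactly as you anticipate (spanning tree plus one fundamental cycle), and a $(1+\epsilon)^i$ perturbation of the cycle weights is what converts $\alpha(H)\neq 1$ into ``each player strictly prefers to satisfy its next cycle edge,'' after which a dedicated two-color cycle lemma (Lemma \ref{lem:best_response}) supplies the non-terminating best-response sequence.

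The one step that would fail as written is your mechanism for making the intended moves genuine \emph{best} responses. You propose setting all individual preferences to zero and adding many dummy colors; for clustering games this goes in the wrong direction. If the witnessing cycle contains anti-coordination edges and $c\ge 3$, a player incident to two such edges best-responds by jumping to a third color differing from both neighbors, which breaks the two-color rotation --- and extra dummy colors only make this escape easier. The paper's fix is the opposite of yours: give every player a huge individual preference $K=\sum_{e\in E} w_e$ on exactly two designated colors $k_1,k_2$, so that every best response lands in $\{k_1,k_2\}$ and the dynamics reduce to the two-color cycle lemma. (This is also why the preference-free version of the characterization survives only for coordination games or for $c=2$; see Remark \ref{rem:best_response_2}.) With that substitution, your argument matches the paper's.
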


In general, this characterization does not hold if the condition of ``guaranteed convergence of best-response dynamics" is replaced by ``guaranteed existence of a pure Nash equilibrium" (as in \cite{Gopalakrishnan2013} or \cite{Chen2010}): 
There are settings where on a fixed graph $G$, a pure Nash equilibrium is guaranteed to exist even if $\alpha$ is not a generalized weighted Shapley distribution rule, e.g., in the case of $c = 2$, or in coordination games with no individual preferences. 

The proof of Theorem \ref{thm:best_response_2} relies on the following lemma. In the proofs of Lemma \ref{lem:best_response} and Theorem \ref{thm:best_response_2}, player or edges indices are always modulo $n$. 

\begin{lemma}\label{lem:best_response}
Consider a symmetric clustering game $(H,2,\alpha,w,q)$ on a cycle $H =  \langle 1, \dots, n \rangle$  with $n$ players and $c = 2$ colors.
If for every strategy profile $s$ it is a best-response for every player $i$ to choose a color that satisfies at least edge $w_{\{i,i+1\}}$, then there exists 
a best-response sequence that does not converge to a Nash equilibrium.
\end{lemma}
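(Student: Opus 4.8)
The plan is to exhibit an infinite best-response sequence, which is exactly what is required: any finite maximal best-response sequence must terminate in a Nash equilibrium, so a best-response sequence fails to converge to a Nash equilibrium precisely when it is infinite.

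First I would set up coordinates. Identify the two colours with $\mathbb{Z}_2 = \{0,1\}$, and for each edge $e_i = \{i,i+1\}$ of the cycle set $\delta_i = 0$ if $e_i$ is a coordination edge and $\delta_i = 1$ if it is an anti-coordination edge, so that $e_i$ is satisfied under a profile $s$ precisely when $s_i \oplus s_{i+1} = \delta_i$. Given the colour $s_{i+1}$ of her successor there is then a unique colour $f_i(s_{i+1}) = s_{i+1}\oplus\delta_i$ with which player $i$ satisfies her forward edge $e_i$; by hypothesis this is the best response of player $i$ at every profile, so whenever $e_i$ is unsatisfied player $i$ strictly improves by ``flipping'' $s_i$ to $f_i(s_{i+1})$. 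Moreover $\bigoplus_i(s_i\oplus s_{i+1}) = 0$ since each $s_i$ occurs twice, so the number of unsatisfied edges always has the fixed parity $D := \bigoplus_i \delta_i$; consequently a Nash equilibrium can exist only if all edges are simultaneously satisfiable, which happens iff $D$ is even, in which case the (at most two) all-satisfied profiles are the only Nash equilibria. Now let the players deviate in round-robin order $1,2,\dots,n,1,2,\dots$, where on her turn a player flips iff her forward edge is currently unsatisfied (and is otherwise skipped, being at her unique best response). Call an unsatisfied edge a \emph{defect}: flipping $s_i$ replaces a defect at $e_i$ by a defect at $e_{i-1}$, unless $e_{i-1}$ was already a defect, in which case the two annihilate. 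The crucial observation is that immediately after player $i$ has taken a turn her forward edge $e_i$ is satisfied, and $e_i$ is exactly the backward edge of the next mover $i+1$; hence from the second pass on every acting player has her backward edge satisfied, so each flip merely moves a defect and the defect count is invariant.

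If $D$ is odd the defect count is $\ge 1$ at every profile no matter where we start (by parity), so no all-satisfied profile — and, there being no Nash equilibrium at all, no Nash equilibrium — is ever reached, and since some player flips on every pass the round-robin sequence is infinite. If $D$ is even, start instead from an all-satisfied profile with vertex $3$ flipped: this has exactly the two defects $e_2 = \{2,3\}$ and $e_3 = \{3,4\}$, while $e_1 = \{1,2\}$ remains satisfied, so player $1$ is skipped on the first pass and her first potential flip occurs only on pass $2$, by which time player $n$'s pass-$1$ turn has re-satisfied $e_n$; hence no annihilation ever occurs, the defect count stays equal to $2$ forever, and no all-satisfied profile — hence no Nash equilibrium — is reached.

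Each step of this sequence is a legitimate strictly-improving best-response deviation, since the mover flips only when her forward edge is unsatisfied, where by hypothesis $f_i(s_{i+1})$ is a best response strictly better than her current colour. The part that needs the most care — and the main obstacle — is the bookkeeping in the even case: one must verify that the defect count genuinely cannot drop, in particular that the first pass (during which player $1$'s predecessor has not yet moved) is harmless for the chosen starting profile, and one should check small $n$ (for $n = 3$ the edge $e_3$ coincides with $e_n$, but the argument still goes through because player $1$ is skipped on the first pass). A secondary point is fixing the precise reading of the hypothesis, namely that no best response of player $i$ leaves $e_i$ unsatisfied, which is what upgrades each flip from a possible tie to a strict improvement.
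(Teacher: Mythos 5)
Your proof is correct and follows essentially the same route as the paper's: both exhibit an infinite best-response sequence in which the unsatisfied edges circulate around the cycle without ever being annihilated, and your parity dichotomy on $D=\bigoplus_i\delta_i$ corresponds exactly to the paper's case distinction between one and two unsatisfied edges in its hand-built starting profile. Your round-robin scheduling with the conserved defect count is cleaner bookkeeping than the paper's ``the resulting profile is isomorphic to $s^0$'' argument, and you are right that the hypothesis must be read as saying that every best response of player $i$ satisfies $e_i$ (so a flip is a strict improvement) --- the paper relies on the same reading implicitly.
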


\begin{proof}
We first construct an initial state $s^0$ using only colors $k_1$ and $k_2$. Set $s_{1}^0 = k_1$, and iteratively, for $i = 2,\dots,n-1$, set $s_{i}^0$ such that edge $w_{\{i,i+1\}}$ is satisfied (using only colors $k_1$ and $k_2$). The color for $s_n$ is chosen in such a way that at least one of the edges $w_{n-1,n}$ or $w_{n,1}$ is not satisfied (this can always be done, since if color $k_1$ would satisfy both edges, that color $k_2$ would satisfy neither, and vice versa). Now, either $i)$ precisely $n-1$ edges of the cycle are satisfied in $s^0$, or $ii)$ precisely $n-2$ edges are satisfied in $s^0$ (and two consecutive edges are not). Both situations are illustrated below. %\textcolor{red}{To be continued...}

\textbf{Case i):} There are precisely $n-1$ edges satisfied. Note that currently edge $\{n,1\}$ is not satisfied. Therefore, by assumption, it is a best-response for player $n$ to switch its other color. But the situation after this switch is isomorphic to the starting profile $s^0$, that is, if we would have started the numbering at node $n$ instead of node $1$. Therefore, we can repeat the same argument, and in particular, after $2n$ of such best-response steps (in which, roughly speaking, the unsatisfied edge moves over the cycle), we are back in $s^0$.

\begin{figure}[h!]
\centering
\begin{tikzpicture}[scale=3]
\coordinate (A1) at (-0.25,0); 
\coordinate (A2) at (0.25,0);
\coordinate (M1) at (-0.5,0.375);
\coordinate (M2) at (0.5,0.375);
\coordinate (T1) at (-0.25,0.75);
\coordinate (T2) at (0.25,0.75);

\node at (A1) [circle,scale=0.7,fill=black] {};
\node (a1) [below=0.1cm of A1]  {$k_1$};
\node at (A2) [circle,scale=0.7,fill=black] {};
\node (a2) [below=0.1cm of A2]  {$k_1$};
\node at (M1) [circle,scale=0.7,fill=black] {};
\node (m1) [left=0.1cm of M1]  {$k_2$};
\node at (M2) [circle,scale=0.7,fill=black] {};
\node (m2) [right=0.1cm of M2]  {$k_2$};
\node at (T1) [circle,scale=0.7,fill=black] {};
\node (t1) [above=0.1cm of T1]  {$k_1$};
\node (t1b) [below=0.1cm of T1]  {$1$};
\node at (T2) [circle,scale=0.7,fill=black] {};
\node (t2) [above=0.1cm of T2]  {$k_1$};

\path[every node/.style={sloped,anchor=south,auto=false}]
(T1) edge[-,very thick] node {$+$} (T2) 
(T2) edge[-,very thick] node {$-$} (M2)
(M2) edge[-,very thick] node {$-$} (A2) 
(A2) edge[-,very thick] node {$+$} (A1) 
(A1) edge[-,very thick] node {$-$} (M1) 
(M1) edge[-,dashed] node {$+$} (T1);  
\end{tikzpicture}
\quad
\begin{tikzpicture}[scale=3]
\coordinate (A1) at (-0.25,0); 
\coordinate (A2) at (0.25,0);
\coordinate (M1) at (-0.5,0.375);
\coordinate (M2) at (0.5,0.375);
\coordinate (T1) at (-0.25,0.75);
\coordinate (T2) at (0.25,0.75);

%\draw [very thick,fill opacity=0.5] (A1) -- (A2) -- (M2) -- (T2)--(T1)--(M1)--(A1)--cycle;

\node at (A1) [circle,scale=0.7,fill=black] {};
\node (a1) [below=0.1cm of A1]  {$k_1$};
\node at (A2) [circle,scale=0.7,fill=black] {};
\node (a2) [below=0.1cm of A2]  {$k_1$};
\node at (M1) [circle,scale=0.7,fill=black] {};
\node (m1) [left=0.1cm of M1]  {$k_1$};
\node at (M2) [circle,scale=0.7,fill=black] {};
\node (m2) [right=0.1cm of M2]  {$k_2$};
\node at (T1) [circle,scale=0.7,fill=black] {};
\node (t1) [above=0.1cm of T1]  {$k_1$};
\node (t1b) [below=0.1cm of T1]  {$1$};
\node at (T2) [circle,scale=0.7,fill=black] {};
\node (t2) [above=0.1cm of T2]  {$k_1$};

\path[every node/.style={sloped,anchor=south,auto=false}]
(T1) edge[-,very thick] node {$+$} (T2) 
(T2) edge[-,very thick] node {$-$} (M2)
(M2) edge[-,very thick] node {$-$} (A2) 
(A2) edge[-,very thick] node {$+$} (A1) 
(A1) edge[-,dashed] node {$-$} (M1) 
(M1) edge[-,very thick] node {$+$} (T1);  
\end{tikzpicture}

\caption{The cycles are numbered clockwise starting at $1$ (indicated at top left). The satisfied edges are bold, and the unsatisfied edges are dashed. On the left the starting state $s^0$ and on the right the profile after one best-response move as described above.}
\label{fig:case1_after}
\end{figure}
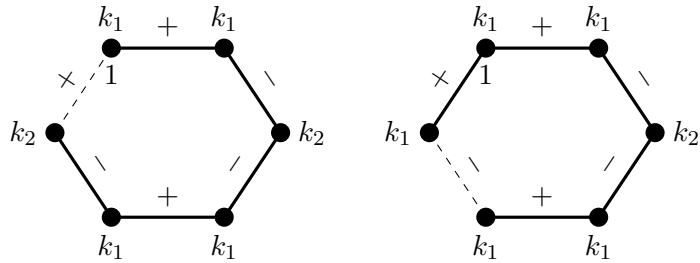 
\textbf{Case ii):} There are precisely $n-2$ edges satisfied except for the two consecutive edges $\{n_1,n\}$ and $\{n,1\}$. If player $n$ would switch to its other color (which is a best-response move) then we would find a Nash equilibrium, however, we do not choose player $n$. Instead we let player $n-1$ switch to its other color (which is a best-response move by assumption), then afterwards, it is a best-response for player $n-2$ to switch as well (in order to satisfy edge $\{n-2,n-1\}$, and we continue this in decreasing player order up until (and including) player $2$. In particular, we are then in the situation were again precisely $n-2$ edges are satisfied except two consecutive edges, which are now $\{n,1\}$ and $\{1,2\}$. This situation is equivalent to the starting state $s^0$, and in particular by repeating this process $n$ times, we are back in $s^0$. 
This completes the proof.
 \end{proof}
\begin{figure}[h!]
\centering
\begin{tikzpicture}[scale=3]
\coordinate (A1) at (-0.25,0); 
\coordinate (A2) at (0.25,0);
\coordinate (M1) at (-0.5,0.375);
\coordinate (M2) at (0.5,0.375);
\coordinate (T1) at (-0.25,0.75);
\coordinate (T2) at (0.25,0.75);

\node at (A1) [circle,scale=0.7,fill=black] {};
\node (a1) [below=0.1cm of A1]  {$k_1$};
\node at (A2) [circle,scale=0.7,fill=black] {};
\node (a2) [below=0.1cm of A2]  {$k_1$};
\node at (M1) [circle,scale=0.7,fill=black] {};
\node (m1) [left=0.1cm of M1]  {$k_1$};
\node at (M2) [circle,scale=0.7,fill=black] {};
\node (m2) [right=0.1cm of M2]  {$k_2$};
\node at (T1) [circle,scale=0.7,fill=black] {};
\node (t1) [above=0.1cm of T1]  {$k_1$};
\node (t1b) [below=0.1cm of T1]  {$1$};
\node at (T2) [circle,scale=0.7,fill=black] {};
\node (t2) [above=0.1cm of T2]  {$k_1$};

\path[every node/.style={sloped,anchor=south,auto=false}]
(T1) edge[-,very thick] node {$+$} (T2) 
(T2) edge[-,very thick] node {$-$} (M2)
(M2) edge[-,very thick] node {$-$} (A2) 
(A2) edge[-,very thick] node {$+$} (A1) 
(A1) edge[-,dashed] node {$-$} (M1) 
(M1) edge[-,dashed] node {$-$} (T1);  
\end{tikzpicture}
\quad
\begin{tikzpicture}[scale=3]
\coordinate (A1) at (-0.25,0); 
\coordinate (A2) at (0.25,0);
\coordinate (M1) at (-0.5,0.375);
\coordinate (M2) at (0.5,0.375);
\coordinate (T1) at (-0.25,0.75);
\coordinate (T2) at (0.25,0.75);

\node at (A1) [circle,scale=0.7,fill=black] {};
\node (a1) [below=0.1cm of A1]  {$k_2$};
\node at (A2) [circle,scale=0.7,fill=black] {};
\node (a2) [below=0.1cm of A2]  {$k_2$};
\node at (M1) [circle,scale=0.7,fill=black] {};
\node (m1) [left=0.1cm of M1]  {$k_1$};
\node at (M2) [circle,scale=0.7,fill=black] {};
\node (m2) [right=0.1cm of M2]  {$k_1$};
\node at (T1) [circle,scale=0.7,fill=black] {};
\node (t1) [above=0.1cm of T1]  {$k_1$};
\node (t1b) [below=0.1cm of T1]  {$1$};
\node at (T2) [circle,scale=0.7,fill=black] {};
\node (t2) [above=0.1cm of T2]  {$k_2$};

\path[every node/.style={sloped,anchor=south,auto=false}]
(T1) edge[-,dashed] node {$+$} (T2) 
(T2) edge[-,very thick] node {$-$} (M2)
(M2) edge[-,very thick] node {$-$} (A2) 
(A2) edge[-,very thick] node {$+$} (A1) 
(A1) edge[-,very thick] node {$-$} (M1) 
(M1) edge[-,dashed] node {$-$} (T1);  
\end{tikzpicture}
\caption{The cycle is numbered clockwise starting at $1$ (indicated at top left). The satisfied edges are bold, and the unsatisfied edges are dashed. On the left the starting state $s^0$, and on the right the profile after the first $n-2$ best-response moves as described above.}
\label{fig:case2}
\end{figure}
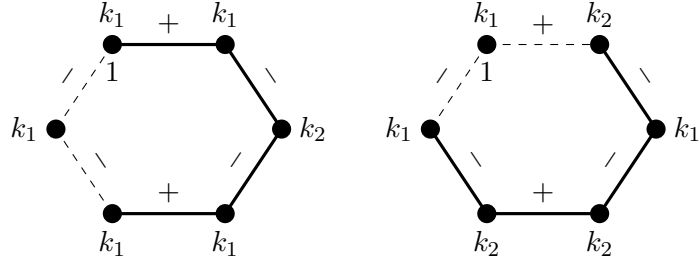 

\begin{proof}[Proof of Theorem \ref{thm:best_response_2}]
If $\alpha$ corresponds to a generalized weighted Shapley distribution rule, then the convergence of best-response dynamics follows immediately from the fact that the game can be modeled as a resource allocation game (see Appendix \ref{sec:app_resource}). Such games, with generalized weighted Shapley distribution rules, are potential games, see, e.g., \cite{Gopalakrishnan2013} for details. We now continue with the other direction, i.e., assume that best-response dynamics always converge.

We first look at the subgraph of $G$ consisting of edges $e = \{i,j\}$ for which $\alpha_{ij}, \alpha_{ji} > 0$. Let $Q_1,\dots,Q_r$ be the node sets of the connected components in this subgraph (isolated nodes are also connected components). We then define the digraph $D = ([r], A)$ where there is a directed arc from $a$ to $b$ with $a,b \in [r]$ if and only if there is an edge $e = \{i,j\} \in E$ such that $\alpha_{ij} = 0$ (and thus $\alpha_{ji} > 0$) with $i \in Q_a$ and $j \in Q_b$.

We claim that $D$ must be acyclic, i.e., it does not contain cycles or self-loops. For both cases we can construct a counter-example (see below). Any topological ordering for the graph $D$ then gives rise to a permutation $\sigma$ of the nodes in $V$ satisfying condition $i)$ in the definition of a generalized weighted Shapley scheme.

We first give a counter-example for the case that there is a self-loop. If a component $Q$ contains a self-loop, then there is some cycle $H$ in $G$, numbered $v_1$ through $v_h$ for some orientation, with the property that the edge $e_h = \{v_{h},v_1\}$ has $\alpha_{v_hv_1} > 0$ and $\alpha_{v_1v_h} = 0$ and for all other edges $e_i = \{v_i,v_{i+1}\}$ both $\alpha_{v_iv_{i+1}}, \alpha_{v_{i+1}v_i} > 0$. 
Now, fix some weight $w_{v_1v_2} > 0$ and iteratively define the weights $w_{v_iv_{i+1}}$ so that 
$$
\frac{\alpha_{v_iv_{i-1}}}{\alpha_{v_iv_{i-1}} + \alpha_{v_{i-1}v_i}} w_{v_{i-1}v_i} <\frac{\alpha_{v_iv_{i+1}}}{\alpha_{v_iv_{i+1}} + \alpha_{v_{i+1}v_i}} w_{v_iv_{i+1}}.
$$
All other edge-weights (of edges not in $H$) are also set to zero. Individual preferences (for every player) are set to $K = \sum_{e \in E} w_e$ for two fixed colors $k_1$ and $k_2$, and zero otherwise. 
In particular this means that for all players $v_i$, for $i = 2,\dots,h$, a best-response move will always be either color $k_1$ or $k_2$ in any strategy profile. (This is not necessary in the case of coordination games, because if every player starts with either $k_1$ or $k_2$, then any best-response move will also be one of these two colors, since deviating to some third color would give a pay-off of zero, so in that case we can set all individual preferences to zero. For the case where there are also anti-coordination edges, this does not work, since then deviating to some third color might the (only) best-response if a player is adjacent to two anti-coordination edges. Nevertheless, deviating to $k_1$ or $k_2$ will still be a better-response (but not a best-response). Of course, if $c = 2$, then there is no problem. These arguments justify the statements in  Remark \ref{rem:best_response_2}.) Moreover, note that the utility of player $v_h$ derived from edge $e_h$ is zero (which is strictly lower that its utility derived from edge $e_1$). Now we can apply Lemma \ref{lem:best_response} to conclude that best-response dynamics are then not guaranteed to converge. 

For a cycle in $D$ the argument is similar. Then the cycle $H$ contains consecutive sections consisting of arcs with both shares strictly positive and an arc with one share zero at the end. On every such section we can choose weights $w$ as above and then use Lemma \ref{lem:best_response} again.

For a fixed connected component $Q = Q_d$ for some $d$, with $|V(Q)| = q$, we claim that there exist weights $\gamma_Q = (\gamma_1,\dots,\gamma_q)$ so that 
\begin{equation}\label{eq:gamma_share}
\frac{\alpha_{ij}}{\alpha_{ij} + \alpha_{ji}} = \frac{\gamma_i}{\gamma_i + \gamma_j}
\end{equation}
for all $i,j \in \{1,\dots,q\}$. It is clear that if we can find such a vector $\gamma$, and multiply every $\gamma_i$ with a fixed constant $d > 0$, then the weights $d \cdot \gamma_i$ also satisfy the desired condition. In particular, this implies that we can always fix $\gamma_1$ as we like without loss of generality. For every player $j$ adjacent to player $1$, the equation (\ref{eq:gamma_share}) then uniquely determines $\gamma_j$, that is, we have
$
\gamma_j = \frac{\alpha_{j1}}{\alpha_{1j}} \gamma_1.
$
Moreover, by repeating this argument we can construct a spanning tree $T$ (on the nodes in $Q$) with the property that
$$
\gamma_j = \frac{\alpha_{p_2p_1}\alpha_{p_3p_2}\dots\alpha_{p_{h_j}p_{h_j-1}}}{\alpha_{p_1p_2}\alpha_{p_2p_3}\dots\alpha_{p_{h_j-1}p_{h_j}}}\gamma_1
$$
where $(p_1,p_2,\dots,p_{h_j})$ is the unique path from player $1$ to player $j$ in $T$. Now, if there is some edge $e \in E(G) \setminus E(T)$ with the property that (\ref{lem:best_response}) is not satisfied, then it follows that
$$
\alpha(H) :=\frac{\alpha_{p_2p_1}\alpha_{p_3p_2}\dots\alpha_{p_{1}p_{h_j}}}{\alpha_{p_1p_2}\alpha_{p_2p_3}\dots\alpha_{p_{h_j}p_{1}}} \neq 1
$$
where $H = (p_1,\dots,p_h,p_1)$, with $p_1 = 1$, is the unique cycle in $Q$ containing edge $e$. (The expression $\alpha(H) = 1$ is analogue to the cyclic consistency property of Gopalakrishnan et al. \cite{Gopalakrishnan2013}.) We let $e_i = \{p_i,p_{i+1}\}$ for $i = 1,\dots,h$ modulo $h$.
Assume without loss of generality that $\alpha(H) < 1$. Then there exists a constant $\epsilon > 0$ so that 
$
(1+\epsilon)^n\alpha(H) < 1.
$ 
Fix some weight $w_{e_1} > 0$ and  iteratively define the weights $w_{e_i}$ so that 
\begin{equation}\label{eq:w}
\frac{\alpha_{p_ip_{i-1}}}{\alpha_{p_ip_{i-1}} + \alpha_{p_{i-1}p_{i}}} w_{e_{i-1}} = \frac{\alpha_{p_ip_{i+1}}}{\alpha_{p_ip_{i+1}} + \alpha_{p_{i+1}p_i}} w_{e_i}
\end{equation}
for $i = 2,\dots,h$. We then define the weights $w'_{p_ip_{i+1}} = (1+\epsilon)^i\cdot w_{p_ip_{i+1}}$ for $i = 1,\dots,h$. All other edge-weights (of edges not in $H$) are also set to zero. Individual preferences (for every player) are set to $K = \sum_{e \in E} w_e$ for two fixed colors $k_1$ and $k_2$, and zero otherwise (similar as in the first part of the proof). Then it follows directly that for all players $i = 2,\dots,h$ it is always a best-response to choose a color ($k_1$ or $k_2$) that satisfies edge $w_{i,i+1}'$. Moreover, this is also true for player $1$, which can be seen as follows. Suppose it is not true, then
$$
\frac{\alpha_{p_1p_h}}{\alpha_{p_1p_h} + \alpha_{p_hp_1}} w_{e_h} \cdot (1+\epsilon)^n \geq \frac{\alpha_{p_1p_2}}{\alpha_{p_1p_2} + \alpha_{p_{2}p_1}} w_{e_1}.
$$
If we multiply all equalities in (\ref{eq:w}) with this inequality, then after simplification, we find $(1+\epsilon)^n\alpha(H) \geq 1$, which contradicts with the choice of $\epsilon$. We can now again apply Lemma \ref{lem:best_response} and that concludes the proof. 
\end{proof}

\begin{remark}\label{rem:best_response_2} 
Theorem \ref{thm:best_response_2} remains valid also for various settings without individual preferences. For example, this holds for coordination games (corresponding to certain models in \cite{Anshelevich2014,Feldman2015}) and for  general clustering games with $c = 2$. (In general, this is not true if $c \geq  3$. E.g., consider a cycle of length three with only anti-coordination edges.) 
\end{remark}

\subsection{Symmetric Coordination Games} 
We next consider the special case of symmetric coordination games in which the common strategy set contains $c \ge 3$ colors. We can strengthen the characterization result of Theorem \ref{thm:best_response_2} in this case. More specifically, we prove in Theorem \ref{thm:pure_nash} that a pure Nash equilibrium is guaranteed to exist if and only if $\alpha$ is a generalized weighted Shapley distribution rule. 
This complements a result of Anshelevich and Sekar \cite{Anshelevich2014}.

\begin{theorem}\label{thm:pure_nash}
Let $
\mathcal{G}_{G,c,\alpha} = \{ (G,c,\alpha, w, q) : w : E \rightarrow \R_{\geq 0},\ q_i: [c] \rightarrow \R_{\geq 0},\ i \in V \}
$ be the set of all symmetric coordination games
on graph $G$ with common strategy set $\{1,\dots,c\}$ for $c \geq 3$ and distribution rule $\alpha$. Then a pure Nash equilibrium is guaranteed to exist for every game in $\mathcal{G}_{G,c,\alpha}$ if and only if $\alpha$ corresponds to a generalized weighted Shapley distribution rule.
\end{theorem}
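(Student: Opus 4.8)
The plan is to follow the structure of the proof of Theorem~\ref{thm:best_response_2}, but to strengthen its ``only if'' direction so that it yields an instance with \emph{no} pure Nash equilibrium, not merely a non-converging best-response sequence. For the ``if'' direction there is nothing new to do: if $\alpha$ is a generalized weighted Shapley distribution rule, then every game in $\mathcal{G}_{G,c,\alpha}$ is a resource allocation game with a generalized weighted Shapley distribution rule (see Appendix~\ref{sec:app_resource}), and such games are potential games \cite{Gopalakrishnan2013}; hence a pure Nash equilibrium exists. This is exactly the easy direction of Theorem~\ref{thm:best_response_2} specialized to coordination games.

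For the ``only if'' direction, assume $\alpha$ is not a generalized weighted Shapley distribution rule. As in the proof of Theorem~\ref{thm:best_response_2}, there is then a cycle $H=\langle v_1,\dots,v_h\rangle$ in $G$ witnessing this, either a ``directed'' cycle (with $\alpha_{v_iv_{i+1}}=0$ on one arc of each segment, coming from a cycle or self-loop in $D$) or a cycle inside a connected component of the positive-both subgraph with $\alpha(H)\neq 1$. On $H$ I would pick edge-weights $w$ so that every player $v_i$ has a strict best-response incentive to match its clockwise neighbor $v_{i+1}$ — this is precisely the weight construction from the proof of Theorem~\ref{thm:best_response_2} (the iteratively defined weights of~(\ref{eq:w}) rescaled by the factors $(1+\epsilon)^i$; no rescaling is needed in the directed case) — and then add individual preferences supported on three colors $\{1,2,3\}$. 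Three designated players on $H$ are given the role of ``blocking'' colors $1$, $2$ and $3$, respectively, by giving each a moderately large preference for a second color. These preferences are chosen so that (i) for each $k$ the player blocking $k$ strictly prefers to abandon the all-$k$ profile, so no monochromatic profile is a pure Nash equilibrium, and (ii) the best response of every player on $H$ collapses to a single-valued ``identity-or-constant'' map of one neighbor's colour, with the three blocking players' parameters arranged so that the composition of these maps around the cycle is a self-map of $\{1,2,3\}$ \emph{without a fixed point}. A pure Nash equilibrium restricted to $H$ would be a fixed point of this composition, so none exists; extending the instance to all of $G$ by zeroing the weights of edges not on $H$ and the preferences of players not on $H$ finishes the argument. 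Three colors are essential here: for $c=2$ the analogous composition is always ``identity or constant'' on a two-element set and therefore has a fixed point, matching the known existence of pure Nash equilibria when $c=2$ \cite{Anshelevich2014}.

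The main obstacle is making step~(ii) work for an arbitrarily long witnessing cycle in the case $\alpha(H)\neq 1$. When $h>3$, the ``chase'' weights force the colours of the non-designated players to be relayed around $H$ in any candidate equilibrium, so a designated player sees its own colour fed back along one incident edge; the resulting ``stickiness'' bonus can resurrect a pure Nash equilibrium unless the split parameters and the preference gaps are balanced against each other very carefully, and the available slack depends on how far $\alpha(H)$ is from $1$. Arranging $w$ and $q$ so that every player's best response collapses to the intended single-valued map while the cyclic composition stays fixed-point-free for every cycle length is the technical heart of the proof, and this is where the construction parallels the cyclic-consistency arguments of Gopalakrishnan et al.~\cite{Gopalakrishnan2013} and the non-existence example of Anshelevich and Sekar~\cite{Anshelevich2014}.
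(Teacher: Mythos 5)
Your proposal is correct and follows essentially the same route as the paper: the ``if'' direction via the resource-allocation/potential-game reduction, and the ``only if'' direction by reusing the witnessing cycle $H$ and chase weights from the proof of Theorem~\ref{thm:best_response_2} and then adding individual preferences over three colors arranged in a rock--paper--scissors pattern so that the cyclic composition of best responses has no fixed point. The paper's concrete instantiation gives every cycle player a preference of $M+\delta$ for one color and $M$ for a second (with $\delta$ dominated by the clockwise edge incentive), which is exactly your ``identity-or-constant'' collapse; the verification you flag as the technical heart is handled there only by inspection, so your level of detail matches the original.
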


Our arguments are conceptually similar to those of Gopolakrishnan et al. \cite{Gopalakrishnan2013}, however, they are technically different. We essentially show a similar result as in \cite{Gopalakrishnan2013}, but for a more restricted setting than the resource allocation games considered there. 
We elaborate on the connection between Theorem \ref{thm:pure_nash} and the work in \cite{Gopalakrishnan2013} in Appendix \ref{sec:app_resource}.
Nevertheless, the result in Theorem \ref{thm:pure_nash} allows us to fully characterize which distribution $\alpha$ guarantee equilibrium existence, thereby completing results of Anshelevich and Sekar \cite{Anshelevich2014}, who only partially address this question.

In particular, Anshelevich and Sekar \cite{Anshelevich2014} provide an example showing that for general distribution rules, pure Nash equilibria are not guaranteed to exist. On the positive side, they show that if the distribution rule has the so-called \emph{correlated coordination condition}, then pure Nash equilibria are guaranteed to exist. This condition is actually the same as saying that the local distribution rule corresponds to a weighted Shapley distribution rule, and the proof of Theorem 1 in \cite{Anshelevich2014} is essentially a direct consequence of the work of Hart and Mas-Collel \cite{Hart1989} who characterize the (weighted) Shapley value in terms of a (weighted) potential function. Theorem \ref{thm:pure_nash} allows us to precisely characterize which distribution rules $\alpha$ guarantee (pure) equilibrium existence in symmetric coordination games, for arbitrary weight functions and individual preferences. In particular, we note that generalized weighted Shapley distribution rules (see preliminaries) still guarantee equilibrium existence, following from \cite{Gopalakrishnan2013}, by observing that these coordination games are resource allocation games (see Appendix \ref{sec:app_resource}), and show that these distribution rules are also necessary in a certain sense, already in the case of three colors, which is the best possible. 

\begin{proof}[Proof of Theorem~\ref{thm:pure_nash}]
The proof is of a similar nature as that of Theorem \ref{thm:best_response_2}. We follow the notation as introduced there. Consider the graph $D$ and suppose it has a self-loop, then we can construct weights for the edges of the cycle $H$ as in the proof of Theorem \ref{thm:best_response_2}. In addition, we now also define individual preferences (which were set to zero in the proof of Theorem \ref{thm:best_response_2}). Let $C = \{1,\dots,c\}$ and let $M > 0$ be some big constant. We define
$$
q_{v_{h-1}}(j) = \left\{\begin{array}{ll}
M + \delta & \text{ if } j = 1 \\
M & \text{ if } j = 2 \\
0 & \text{ else, }
\end{array}\right. 
\qquad
q_{v_h}(j) = \left\{\begin{array}{ll}
M + \delta & \text{ if } j = 2 \\
M & \text{ if } j = 3 \\
0 & \text{ else, }
\end{array}\right.
$$
and
$$ 
q_{v_i}(j) = \left\{\begin{array}{ll}
M + \delta & \text{ if } j = 3 \\
M & \text{ if } j = 1 \\
0 & \text{ else, }
\end{array}\right. 
$$
for $i = 1,\dots,h-2$ where $\delta > 0$ is chosen sufficiently small such that
$$
\frac{\alpha_{v_i, e_{i-1}}}{\alpha_{v_i, e_{i-1}} + \alpha_{v_{i-1}, e_{i-1}}} w_{v_{i-1}v_i} + \delta < \frac{\alpha_{v_i, e_{i}}}{\alpha_{v_i, e_{i-1}} + \alpha_{v_{i}, e_{i}}} w_{v_iv_{i+1}}
$$
is true for all $i = 1,\dots,h$. Through inspection it can be seen that this instance does not have a pure Nash equilibrium. Similar arguments can be carried out inside a fixed component $Q$ (as in the second part of the proof of Theorem \ref{thm:best_response_2}). 
\end{proof}

\section{Results for Asymmetric Coordination Games}\label{sec:ext}
In this section, we present our results for asymmetric coordination 
games. We focus on coordination games with equal-split distribution rule and no individual preferences. 

\subsection{Approximate Nash Equilibria}
Apt et al. \cite{Apt2015} show that the $(1,1)$-PoA of coordination games is unbounded if $c \ge n+1$. Notably, this holds for arbitrary graph topologies with unit weights and without individual preferences. 
We slightly generalize this observation. We show that the Price of Anarchy is unbounded if and only if $c \ge \chi(G)+1$, where $\chi(G)$ is the chromatic number of $G$. 

\begin{theorem}\label{lem:1poa}
Let $\mathcal{G}_{G}(c)$ be the set of all coordination games $\Gamma = \{G,c,(S_i)_{i \in V},\mathbf{1},\mathbf{1},\mathbf{0}\}$ on graph $G$ with $c$ colors and equal-split distribution rule. Then $(\epsilon,1)\text{-PoA}(\mathcal{G}_{G}(c)) = \infty$ if $c \geq \chi(G) + 1$ and finite if $c < \chi(G)$.
\end{theorem}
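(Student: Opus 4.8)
The plan is to prove the two asserted directions independently; I would say nothing about the borderline case $c = \chi(G)$, which the statement deliberately omits. Neither direction hides a real difficulty --- the content is entirely in choosing the right instance for the lower bound and the right elementary fact for the upper bound --- so what could be called the ``hard part'' is simply recognizing the role played by the chromatic number and by the freedom to choose asymmetric strategy sets.

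For the direction $c \ge \chi(G)+1 \Rightarrow (\epsilon,1)\text{-PoA}(\mathcal{G}_G(c)) = \infty$, I would first discard the degenerate edgeless case (if $G$ has no edges, every profile has social welfare $0$ and the statement is vacuous), so assume $\chi(G) \ge 2$. Then I would exhibit a single game in $\mathcal{G}_G(c)$ witnessing an unbounded ratio. Fix a proper coloring $\phi : V \to [\chi(G)]$ of $G$ and set $c_0 := \chi(G)+1$, which is an available color since $c \ge \chi(G)+1$. Take the asymmetric strategy sets $S_i := \{\phi(i),\, c_0\}$ for all $i \in V$ (keeping $w = \mathbf{1}$, $\alpha = \mathbf{1}$, $q = \mathbf{0}$ as required). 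I would then check: (i) the profile $s$ with $s_i = \phi(i)$ is a pure Nash equilibrium with $u(s) = 0$ --- since $\phi$ is proper no edge is satisfied, so $u_i(s)=0$; the only alternative available to $i$ is $c_0$, and no neighbor of $i$ plays $c_0$ under $s$, so that deviation also yields $0$; hence $s$ is a pure Nash equilibrium and a fortiori an $(\epsilon,1)$-equilibrium for every admissible $\epsilon$; and (ii) the profile $s^*$ with $s^*_i = c_0$ for all $i$ satisfies every edge, so $u(s^*) = |E| \ge 1$. Together these give $(\epsilon,1)\text{-PoA}(\mathcal{G}_G(c)) \ge u(s^*)/u(s) = \infty$. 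I would stress that this genuinely uses asymmetric strategy sets: the auxiliary color $c_0$ does double duty, keeping the proper-coloring profile free of any profitable unilateral deviation while simultaneously making global coordination feasible in the optimum.

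For the direction $c < \chi(G) \Rightarrow (\epsilon,1)\text{-PoA}(\mathcal{G}_G(c)) < \infty$, no equilibrium hypothesis is needed. Any strategy profile $s$ colors the vertices with at most $c < \chi(G)$ colors and hence cannot be a proper coloring of $G$, so there is an edge $\{i,j\}$ with $s_i = s_j$, which (the game being a coordination game) is satisfied. With unit weights and the equal-split rule this single satisfied edge contributes $\frac12 + \frac12 = 1$ to the social welfare, so $u(s) \ge 1$ for \emph{every} profile $s$, in particular for every $(\epsilon,1)$-equilibrium; since $u(s^*) \le |E|$ trivially, this yields $(\epsilon,1)\text{-PoA}(\mathcal{G}_G(c)) \le |E| < \infty$.

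As for obstacles, there are none of substance. The only two points that deserve a line of care are (a) ruling out the degenerate edgeless graph in the first direction, and (b) noting that both halves hold uniformly over the admissible range of $\epsilon$: the extremal profile in the first direction is an \emph{exact} pure Nash equilibrium, and the bound $u(s)\ge 1$ in the second direction is a property of arbitrary profiles, hence insensitive to the choice of equilibrium refinement.
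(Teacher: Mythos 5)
Your proposal is correct and follows essentially the same route as the paper: the same lower-bound instance (strategy sets $\{\phi(i), c_0\}$ built from a proper coloring, with the proper coloring as a zero-welfare equilibrium and the common color $c_0$ as the optimum) and the same upper-bound observation (fewer than $\chi(G)$ colors forces a satisfied edge in every profile). The extra care you take with the edgeless case and with uniformity in $\epsilon$ is fine but does not change the argument.
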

\begin{proof}[Proof of Theorem \ref{lem:1poa}]
For a given coloring of $G$ with colors $a_1,\dots,a_{\chi(G)}$, we assign a strategy set of $\{a_i,a_0\}$ to all nodes that are colored with color $i \in \{1,\dots,\chi(G)\}$. In particular, the strategy profile $s$ in which every player chooses its color $a_i$ from the coloring is a pure Nash equilibrium with utility $u(s) = 0$, whereas the profile $s^*$ in which every players chooses $a_0$ is a socially optimal profile with $u(s^*) = |E(G)|$.  This shows unboundedness if $c \geq \chi(G) + 1$.

If $c < \chi(G)$ then in every strategy profile $s$ there is at least one edge $e \in E(G)$ such that its endpoints have the same color in $s$. This show boundedness.
\end{proof}

We can exploit the above insight to prove that if the number of colors $c$ is a constant then the Price of Anarchy is unbounded for sparse random graphs, while it is bounded by some constant for dense random graphs:

\begin{theorem}[Constant strategy sets]\label{thm:constant}
Let $\epsilon \geq 1$ and let $c^* \geq 3$ be a given integer.  
Let $\mathcal{G}_{G_n,c^*}$ be the set of all coordination games $\Gamma = (G_n,c^*,(S_i)_{i\in V},\mathbf{1},\mathbf{1},\mathbf{0})$ on graph $G_n \sim G(n,p)$ with strategy sets $S_i \subseteq [c^*]$ for every player $i$. Then there exists a constant $d = d(c^*)$ such that for $p \le d/n$, we have
$
\lim_{n \rightarrow \infty} \mathbb{P}_{G_n \sim G(n,p)}\left\{ (\epsilon,1)\text{-PoA}\left(\mathcal{G}_{G_n,c^*}\right) = \infty \right\} = 1.
$
In contrast, if $p \in (0,1)$ is constant, then there exists a constant $\beta_0 = \beta_0(p,c^*)$ such that
$
\lim_{n \rightarrow \infty} \mathbb{P}_{G_n \sim G(n,p)}\left\{ (\epsilon,1)\text{-PoA}\left(\mathcal{G}_{G_n,c^*}\right) \leq \beta_0 \right\} = 1.
$
\end{theorem}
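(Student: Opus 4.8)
The plan is to prove the two halves separately, in both cases reducing to Theorem~\ref{lem:1poa} (or the idea behind it) together with classical facts about $G(n,p)$.

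For the sparse regime I would take $d=d(c^*)$ to be any constant with $0<d<1$ (in fact $d=1/2$ works uniformly). For $G_n\sim G(n,d/n)$ I use two standard facts, each holding with probability $1-o(1)$: (a) every connected component of $G_n$ is a tree or unicyclic, hence $\chi(G_n)\le 3\le c^*$; and (b) $G_n$ has an isolated edge, i.e.\ a component isomorphic to $K_2$ — the expected number of such components is $\binom{n}{2}\tfrac{d}{n}(1-\tfrac{d}{n})^{2(n-2)}=\Theta(n)$, and a second-moment estimate gives concentration. Condition on (a) and (b), fix a proper colouring $s$ of $G_n$ with colours from $[c^*]$, and fix an isolated edge $\{i,j\}$. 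Since $N(i)=\{j\}$ and $N(j)=\{i\}$, the colours appearing on neighbours of $i$ or $j$ are exactly $\{s_i,s_j\}$, so (as $c^*\ge 3$) there is $c_0\in[c^*]\setminus\{s_i,s_j\}$. Build the coordination game with unit weights, no individual preferences, $S_k=\{s_k\}$ for $k\notin\{i,j\}$, and $S_i=\{s_i,c_0\}$, $S_j=\{s_j,c_0\}$. Then $s$ is a pure Nash equilibrium — every player other than $i,j$ is frozen, and the only alternative move of $i$ (resp.\ $j$), namely $c_0$, is not played by its unique neighbour — with social welfare $0$, whereas recolouring both $i$ and $j$ to $c_0$ yields a feasible profile of social welfare $\ge 1$. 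Hence $(\epsilon,1)\text{-PoA}(\mathcal G_{G_n,c^*})=\infty$ for every $\epsilon\ge 1$. (For $c^*\ge 4$ one could instead apply Theorem~\ref{lem:1poa} directly, as $\chi(G_n)\le 3\le c^*-1$ w.h.p.; the isolated-edge construction is needed to also cover $c^*=3$, where $\chi(G_n)\le 2$ is \emph{false} with constant probability because odd cycles persist for every fixed $d>0$.)

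For the dense regime, fix $p\in(0,1)$. Since every strategy set is a subset of $[c^*]$, every feasible profile — and a fortiori every $(\epsilon,1)$-equilibrium — is a colouring $s\colon V_n\to[c^*]$, whose social welfare (unit weights) equals its number of monochromatic edges. So it suffices to show that with probability $1-o(1)$ every partition of $V_n$ into $c^*$ classes leaves at least $\gamma n^2$ monochromatic edges for some constant $\gamma=\gamma(p,c^*)>0$; combined with $u(s^*)\le |E(G_n)|\le\binom{n}{2}$ this gives $(\epsilon,1)\text{-PoA}(\mathcal G_{G_n,c^*})\le\beta_0$ with $\beta_0=\beta_0(p,c^*)=\order(c^*/p)$. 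For a fixed partition into classes $V_1,\dots,V_{c^*}$, convexity of $x\mapsto\binom{x}{2}$ gives $\sum_a\binom{|V_a|}{2}\ge c^*\binom{n/c^*}{2}=\tfrac{n^2}{2c^*}(1-o(1))=:N$; the monochromatic edges form a $\mathrm{Bin}\bigl(\sum_a\binom{|V_a|}{2},p\bigr)$ variable (disjoint intra-class pairs, each present independently with probability $p$) of mean $\ge pN$, so a Chernoff bound makes the probability of fewer than $pN/2$ monochromatic edges at most $e^{-\Omega(pN)}=e^{-\Omega(n^2)}$. A union bound over the at most $(c^*)^n=e^{\order(n)}$ partitions fails with probability $e^{\order(n)}e^{-\Omega(n^2)}=o(1)$, so one may take $\gamma=p/(4c^*)$ and $\beta_0=\order(c^*/p)$.

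The main obstacle is this dense direction: the union bound over all $(c^*)^n$ colourings forces each colouring's failure probability to be $e^{-\omega(n)}$, which is precisely why one needs each class of a balanced partition to span $\Theta(n^2)$ pairs, i.e.\ why the constancy of $c^*$ is essential — and correspondingly why $\beta_0$ must (and does) degrade with $c^*$. The sparse direction is comparatively routine once one spots the subtle point already flagged, namely that the whole-graph construction of Theorem~\ref{lem:1poa} is unavailable at $c^*=3$ and must be replaced by planting the extra colour on a single isolated edge, which exists with high probability.
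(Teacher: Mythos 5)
Your proof is correct, and the two halves relate to the paper's proof differently. The dense half is essentially the paper's argument in a slightly different packaging: the paper shows that w.h.p.\ \emph{every} subset of at least $n/c^*$ vertices spans $\Theta(n^2)$ edges and then applies pigeonhole to the largest colour class of an arbitrary profile, whereas you union-bound directly over all $(c^*)^n$ partitions; both hinge on the same Chernoff-plus-union-bound mechanism, require $c^*$ constant for the same reason, and yield $\beta_0 = O(c^*/p)$. The sparse half is where you genuinely depart from the paper, and to your credit. The paper's proof consists of one line: the chromatic number of $G(n,d/n)$ is a constant $\beta(d)$ w.h.p., non-decreasing in $d$, and then (implicitly) one picks $d$ so that $\beta(d)+1\le c^*$ and invokes Theorem~\ref{lem:1poa}. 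As you correctly flag, this reduction is problematic at $c^*=3$: it would require $\chi(G_n)\le 2$ w.h.p., but for every fixed $d>0$ the graph contains an odd cycle with probability bounded away from zero, so bipartiteness fails to hold w.h.p. Your replacement --- take $d<1$ so that all components are trees or unicyclic (hence properly $3$-colourable within $[c^*]$), plant the third colour on the two endpoints of an isolated $K_2$ component (which exists w.h.p.\ by a first/second moment computation), and check that the proper colouring is a zero-welfare equilibrium while the planted colour gives positive optimal welfare --- is a clean and fully rigorous substitute that covers all $c^*\ge 3$ uniformly (even with a single $d=1/2$ rather than a $c^*$-dependent one). The only cost is that you re-derive a small piece of the Theorem~\ref{lem:1poa} construction by hand instead of citing it as a black box; what you buy is an argument that actually closes the $c^*=3$ case, which the paper's one-line reduction does not.
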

\begin{proof}%[Proof of Theorem \ref{thm:constant}] 
For sparse Erd\H{o}s-R\'enyi graphs $G \sim G(n,d/n)$ for constant $d > 0$, it is known that the chromatic number is a constant $\beta = \beta(d)$ with high probability, see \cite{Luczak1991}. It is not hard to see that this constant is non-decreasing in $d$. On the other hand, for $G \sim G(n,p)$ with $p$ constant independent of $n$, a standard argument shows that with high probability as $n \rightarrow \infty$, there is a constant $\beta = \beta(p)$ such that every subset of more than $\beta n$ nodes contains $\Theta(n^2)$ edges. Since the maximum number of colors in a strategy set is bounded by $c^*$, there is at least one subset of $n/c^*$ nodes such that all players in this set play the same color in a given Nash equilibrium $s$. This means that the social cost of any Nash equilibrium is $\Theta(n^2)$. Moreover, since all edge-weights are one, it follows that the social cost of an optimal strategy profile is $O(n^2)$. This proves that there is a constant $\beta_0$ dependent on both $c^*$ and $p$.
\end{proof}

\subsection{Approximate $k$-Strong Equilibria}
In general, approximate Nash equilibria are not guaranteed to exist in asymmetric coordination games (see, e.g., \cite{Apt2015}). 
In this section, we therefore consider the Price of Anarchy of $(\epsilon, k)$-equilibria with $k \ge 2$. 
It is known that the $(\epsilon,k)$-PoA of coordination games is between $2\epsilon(n-1)/(k-1) + 1 - 2 \epsilon$ and $2\epsilon(n-1)/(k-1)$ for $k \geq 2$ \cite{Rahn2015}. In particular, the $(\epsilon,k)$-PoA grows like $\Theta(\epsilon n)$ if $k$ is a constant. 

We derive a topological bound on the $(\epsilon, k)$-Price of Anarchy which depends on the maximum degree $\Delta(G)$ of the graph $G$.

\begin{theorem}[Degree bound]\label{lem:max_degree}
Let $\epsilon \geq  1$, $k \geq 2$, $c \geq 3$, and let $G$ be an arbitrary graph. Let $\mathcal{G}_{G}(c)$ be the set of all coordination games $\Gamma = (G,c,(S_i),\vec{1}, w,\mathbf{0})$ on graph $G$ with $c$ colors, equal-split distribution rule and no individual preferences.
Then 
$$
\epsilon \cdot \max\bigg\{1,\frac{\Delta(G)}{k-1} - 1\bigg\} \leq (\epsilon,k)\text{-PoA}(\mathcal{G}_{G}(c)) \leq 2\epsilon\cdot\Delta(G).
$$
\end{theorem}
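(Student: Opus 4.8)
The plan is to establish the two inequalities separately. Throughout I would use the basic identity for these games: with the equal-split rule and no individual preferences, $u_i(s)=\tfrac12\sum_{\{i,j\}\in E:\,s_i=s_j}w_{ij}$, so the social welfare $u(s)=\sum_i u_i(s)$ is exactly the total weight of the \emph{monochromatic} edges in $s$. This identity is what makes both directions manageable.

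\textbf{Upper bound.} Fix an $(\epsilon,k)$-equilibrium $s$ and a social optimum $s^*$. For each edge $e=\{i,j\}$ that is monochromatic in $s^*$ (i.e.\ $s^*_i=s^*_j$) I would apply the equilibrium condition to the coalition $K=\{i,j\}$ — admissible since $|K|=2\le k$ — deviating to $s'_K=(s^*_i,s^*_j)$. After this deviation $i$ and $j$ play the same color, so $e$ contributes $\tfrac12 w_e$ to the utility of each, and the $(\epsilon,k)$-condition yields an endpoint $h(e)\in\{i,j\}$ with $\epsilon\,u_{h(e)}(s)\ge\tfrac12 w_e$. Summing over all edges monochromatic in $s^*$ gives $u(s^*)=\sum_{e\ \text{mono in}\ s^*}w_e\le 2\epsilon\sum_e u_{h(e)}(s)$; since each vertex $v$ serves as $h(e)$ for at most $\deg_G(v)\le\Delta(G)$ edges, the right-hand side is at most $2\epsilon\,\Delta(G)\sum_v u_v(s)=2\epsilon\,\Delta(G)\,u(s)$. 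Note this uses only $k\ge2$.

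\textbf{Lower bound.} The $\epsilon$ in $\max\{1,\cdot\}$ comes from a tiny gadget: take a vertex $v$ of degree at least two with neighbours $v_1,v_2$, put weight $1$ on $\{v,v_1\}$, weight $\epsilon$ on $\{v,v_2\}$ and weight $0$ elsewhere, and set $S_{v_1}=\{a\}$, $S_{v_2}=\{b\}$, $S_v=\{a,b\}$ with $a\ne b$; the profile with $v$ playing $a$ is an $(\epsilon,k)$-equilibrium for every $k$ (the only deviation, $v\mapsto b$, is worth $\epsilon/2=\epsilon\,u_v(s)$), while the optimum has $v$ play $b$, giving ratio $\epsilon$. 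For the term $\epsilon(\Delta(G)/(k-1)-1)$ I would take a vertex $v$ of maximum degree $\Delta:=\Delta(G)$, partition its neighbours into $g=\lceil\Delta/(k-1)\rceil$ groups $P_1,\dots,P_g$ of size at most $k-1$, and use colors $0,1,\dots,g$: each vertex of $P_t$ gets strategy set $\{0,t\}$, $v$ gets $\{0,1,\dots,g\}$; the edges from $v$ to $P_1$ have weight $1$, those from $v$ to $P_t$ ($t\ge2$) have weight $\epsilon$, and all other edges of $G$ have weight $0$. In the equilibrium $s$, $v$ and $P_1$ play color $1$ and each $P_t$ ($t\ge2$) plays color $t$; in the optimum $s^*$ everyone plays color $0$. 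Then $u(s)=|P_1|$ and $u(s^*)=|P_1|+\epsilon\sum_{t\ge2}|P_t|=|P_1|+\epsilon(\Delta-|P_1|)$, so with $|P_1|=k-1$ the ratio is $1+\epsilon(\Delta/(k-1)-1)$.

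The main obstacle is verifying that $s$ really is an $(\epsilon,k)$-equilibrium. Single deviations of $v$ are fine by construction (moving to color $t\ge2$ collects exactly $P_t$, worth $\epsilon|P_t|/2\le\epsilon|P_1|/2=\epsilon\,u_v(s)$; moving to $0$ collects nobody), and no neighbour can deviate profitably since a vertex of $P_t$, $t\ge2$, cannot reach $v$'s color. For a coalition $K=\{v\}\cup S$ with $|S|\le k-1$ (coalitions avoiding $v$ are harmless), the crucial design choice is that $v$'s equilibrium group $P_1$ sits on its own color $1$, not on the ``optimum'' color $0$: hence for any color $\chi$ to which the coalition moves $v$, the neighbours playing $\chi$ afterwards lie entirely inside a single group $P_t$ or inside $S$, i.e.\ at most $k-1$ vertices of weight at most $\epsilon$, so $u_v(s')\le\epsilon(k-1)/2=\epsilon\,u_v(s)$ and player $v$ always certifies the condition. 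One bookkeeping caveat worth flagging is that this gadget spends $g+1=\Theta(\Delta/(k-1))$ colors, so the bound is informative precisely in the regime $c\gtrsim\Delta/(k-1)$ (the relevant one, e.g.\ for dense random graphs); when $\Delta$ is small relative to $k$ only the $\epsilon$-gadget survives, which is exactly why the statement takes the maximum with $1$.
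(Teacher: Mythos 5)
Your upper bound is correct and is essentially the paper's argument in streamlined form: for each edge $e=\{i,j\}$ monochromatic in $s^*$ you invoke the $(\epsilon,k)$-condition for the two-player coalition $\{i,j\}$ deviating to $(s_i^*,s_j^*)$, charge $w_e/2$ to a certifying endpoint, and observe that no vertex is charged more than $\Delta(G)$ times. (The paper routes this through a partition of $V$ into players with positive and zero utility, but your direct version covers the same cases.)

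The lower bound has a genuine gap. Your gadget for the term $\epsilon(\Delta(G)/(k-1)-1)$ uses $g+1=\lceil\Delta(G)/(k-1)\rceil+1$ colors, but the theorem asserts the bound for the class $\mathcal{G}_G(c)$ for \emph{every} fixed $c\geq 3$; your instance simply does not belong to that class once $c<\lceil\Delta(G)/(k-1)\rceil+1$. Your closing remark that the bound is only ``informative'' when $c\gtrsim\Delta(G)/(k-1)$ misreads the statement: the claim (and its downstream use, e.g.\ in Theorem~\ref{thm:max_degree}, where $c_n\geq 3$ is arbitrary) requires the construction to live on three colors. The idea you are missing is to control the center not by splitting its neighborhood into many small color classes it can reach, but by placing all of $N(v)\setminus P_1$ on a \emph{single} third color that $v$ cannot access: give $v$ and the $k-1$ special neighbors the strategy set $\{a,b\}$, all remaining neighbors the set $\{a,c\}$, weights $1$ on the special edges and $\epsilon$ on the rest, and let the equilibrium be ($v$ and $P_1$ play $b$, everyone else plays $c$). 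Then any coalition containing $v$ can move $v$ only to $a$ or $b$, and in either case at most $k-1$ neighbors can share $v$'s color, so $v$ certifies exactly as in your argument — but with only three colors. Your separate degree-two gadget for the $\epsilon\cdot 1$ term is fine and fits within $c\geq 3$.
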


\begin{proof}
We first construct the lower bound. It is sufficient to take $\epsilon = 1$. We may assume that $\Delta(G) > k - 1$ (otherwise we have a trivial lower bound of one). We consider a game with three colors $\{a,b,c\}$. Let $i \in V$ be a node of maximum degree, and let $j_1^*,\dots,j_{k-1}^* \in N(i)$ be $k-1$ fixed neighbors of $i$. We give players $i$ and $j^*_l$ for $l = 1,\dots,k-1$ a strategy set of $\{a,b\}$, and all other nodes a strategy set of $\{a,c\}$. Moreover, edges $\{i,j\}$ get a weight of $w_{ij} = 1$ for $j = j_1^*,\dots,j_{k-1}^*$, $w_{ij} = \epsilon$ for all $j \in N(i) \setminus \{j_1^*,\dots,j_{k-1}^*\}$ and all other edges a weight of zero. It is not hard to see that the strategy profile $s$ in which $s_v = b$ for $v = i,j_1^*,\dots,j_{k-1}^*$, and $s_v = c$ otherwise, is a $k$-equilibrium with utility $u(s) = k-1$. The strategy profile $s^*$ in which all players choose color $a$ is clearly a socially optimal state with utility $u(s^*) = \epsilon(\Delta(G) - k + 1) + k$. This proves the lower bound.

It remains to proof the upper bound. Consider an instance $\Gamma = (G,c,(S_i)_{i\in V},\mathbf{1}, w, \mathbf{0})$. Let $s$ be an $(\epsilon,k)$-equilibrium and let $s^*$ be an optimal strategy profile. Let $V = S \cup T$ be a partition of the node set, where $S = \{i \in V : u_i(s) > 0\}$ and $T = \{i \in V : u_i(s) = 0\}$. 

Let $i,j \in T$ and suppose that $e = \{i,j\} \in E$. We claim that either $w_e = 0$, or $e$ is unsatisfied in $s^*$. Suppose that $w_e > 0$ and $e$ is satisfied in $s^*$. Then, in particular, it follows that $i$ and $j$ both have a color $c'$ in their strategy set, i.e., $S_i \cap S_j \neq \emptyset$. Since $u_i(s) = u_j(s) = 0$, this means that they can (jointly) profitably deviate to $c'$, contradicting the fact that $s$ is a $k$-equilibrium. That is, either one the players chose $c'$ in $s$, in which case the other player can deviate to $c'$ to improve her utility, or $i$ and $j$ can jointly deviate to $c'$ which is feasible because $k \geq 2$. 

The above implies that
$$
u(s^*) \leq \sum_{\{i,j\} \in E(s^*) : \{i,j\} \cap S \neq \emptyset} w_{ij} = \sum_{\{i,j\} \in E(s^*) : \{i,j\} \cap S \neq \emptyset \text{ and } w_{ij} > 0} w_{ij},
$$
where $E(s^*)$ is the set of satisfied edges in $s^*$. 
We now show that the latter summation is at most 
$
2 \epsilon \Delta(G)\cdot u(s),
$
which completes the proof.

First, let $i \in S$ and $j \in T$, and suppose that $e = \{i,j\}$ is satisfied in $s^*$ with $w_e > 0$. The fact that $e$ is satisfied in $s^*$ implies that $i$ and $j$ have a common color $c'$ in their strategy sets. By definition, we have $u_j(s) = 0$, so it must be that $\epsilon \cdot u_i(s) \geq  w_{ij}/2$ otherwise $i$ and $j$ could (jointly) profitably deviate to $c'$.
Secondly, let $i \in S$ and $j \in S$, and suppose that $e = \{i,j\}$ is satisfied in $s^*$ with $w_e > 0$. Similar arguments imply that either $\epsilon \cdot u_i(s) \geq w_{ij}/2$ or $\epsilon \cdot u_j(s) \geq  w_{ij}/2$ (or both). 

In particular, this implies that the edges in 
$
\sset{e \in E(s^*)}{w_e > 0 \text{ and } e \cap S \neq \emptyset}
$ 
can be partitioned into sets $E_1,\dots,E_{|S|}$ defined as
$
E_i = \sset{\{i,j\}}{i \prec j \in N(i)  \text{ and } \epsilon \cdot u_i(s) \geq w_{ij}/2}
$
for all $i \in S$, where $\prec$ is some total ordering on the nodes in $S$. That is, in case both $u_i(s) \ge w_{ij}/2$ and $u_j(s) \geq w_{ij}/2$ we assign edge $\{i,j\}$ to the node which is lower in the ordering $\prec$. Note that $|E_i| \leq \Delta(G)$. By definition of the set $E_i$, we now have that
\begin{align*}
\sum_{\{i,j\} \in E(s^*) : \{i,j\} \cap S \neq \emptyset \text{ and } w_{ij} > 0} w_{ij} & \leq 2 \epsilon \sum_{i \in S} \sum_{\{i,j\} \in E_i } u_i(s)  
\leq 2\epsilon \Delta(G)\sum_{i \in S}  u_i(s)  
 = 2\epsilon \Delta(G)\sum_{i \in V}  u_i(s),
\end{align*}
where the last equality holds because $u_i(s) = 0 $ for all $i \in T = V \setminus S$.  
\end{proof}

We now use this result to bound the $(\epsilon, k)$-Price of Anarchy for random graphs. Note that by exploiting the topological bound of Theorem~\ref{lem:max_degree} it suffices to bound the maximum degree of the corresponding random graph. 
The maximum degree of random graphs drawn according to the Erd\H{o}s-R\'enyi random graph model is well understood; see, e.g., Frieze and Karonski \cite{Frieze2015}.

In particular, for dense random graphs with constant $p = d \in (0,1)$, the maximum degree of a random graph satisfies $\Delta(G) \sim \Theta(n)$ (see, e.g., \cite[Chapter 3]{Frieze2015}). So for these graphs the $(\epsilon, k)$-Price of Anarchy still grows like $\Omega(\epsilon n)$ (as in the worst case).

In contrast, we obtain an improved bound for sparse random graphs. 

\begin{theorem}\label{thm:max_degree}
Let $\epsilon \geq 1$, $k \geq 2$ and $d > 0$ be constants.
Let $(c_n)_{n \in \N}$ be a sequence of integers with $c_n \geq 3$ for all $n$. 
Let $\mathcal{G}_{G_n}(c_n)$ be the set of all coordination games $\Gamma = (G_n, c_n, (S_i), \vec{1}, w,\mathbf{0})$ on graph $G_n \sim G(n,d/n)$ with $c_n$ colors, equal-split distribution rule and no individual preferences. 
Then 
$$
(\epsilon,k)\text{-PoA}(\mathcal{G}_{G_n}(c_n)) = \Theta\left(\frac{\epsilon\ln(n)}{\ln\ln(n)}\right).
$$
\end{theorem}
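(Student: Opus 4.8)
The plan is to combine the degree bound of Theorem~\ref{lem:max_degree} with sharp concentration results for the maximum degree of a sparse Erd\H{o}s--R\'enyi graph. For the \emph{upper bound}, I would invoke the classical fact (see, e.g., \cite[Chapter~3]{Frieze2015}) that for $G_n \sim G(n,d/n)$ with $d>0$ constant, the maximum degree satisfies $\Delta(G_n) = \Theta\!\left(\frac{\ln n}{\ln\ln n}\right)$ with high probability. Indeed, each vertex degree is $\mathrm{Bin}(n-1,d/n)$, essentially $\mathrm{Poisson}(d)$, and the maximum of $n$ such (nearly independent) variables is concentrated around the value $t$ for which $n\cdot\Pr[\mathrm{Poisson}(d)\ge t] = \Theta(1)$; using $\Pr[\mathrm{Poisson}(d)=t]\approx e^{-d}d^t/t!$ and Stirling, this gives $t = \Theta(\ln n/\ln\ln n)$. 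Plugging $\Delta(G_n) = O(\ln n/\ln\ln n)$ into the upper bound $(\epsilon,k)\text{-PoA} \le 2\epsilon\Delta(G_n)$ of Theorem~\ref{lem:max_degree} immediately yields $(\epsilon,k)\text{-PoA}(\mathcal{G}_{G_n}(c_n)) = O\!\left(\frac{\epsilon\ln n}{\ln\ln n}\right)$ with high probability (note the bound in Theorem~\ref{lem:max_degree} is uniform over the number of colors $c\ge 3$, so the growing sequence $(c_n)$ causes no difficulty).

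For the \emph{lower bound}, I would use the matching high-probability lower bound $\Delta(G_n) = \Omega(\ln n/\ln\ln n)$, together with the lower bound $(\epsilon,k)\text{-PoA} \ge \epsilon\max\{1,\Delta(G)/(k-1)-1\}$ from Theorem~\ref{lem:max_degree}. Since $k$ is a fixed constant and $\Delta(G_n) \to \infty$, we have $\Delta(G_n)/(k-1) - 1 = \Theta(\Delta(G_n)) = \Omega(\ln n/\ln\ln n)$, so the construction in the proof of Theorem~\ref{lem:max_degree} — which only needs a vertex of degree $\Delta(G_n)$, a choice of $k-1$ of its neighbors, and at least $3$ available colors — goes through on $G_n$ with high probability and produces an $(\epsilon,k)$-equilibrium whose inefficiency is $\Omega\!\left(\frac{\epsilon\ln n}{\ln\ln n}\right)$. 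Intersecting the two high-probability events (upper and lower concentration of $\Delta(G_n)$) still holds with high probability, giving the claimed $\Theta\!\left(\frac{\epsilon\ln n}{\ln\ln n}\right)$.

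The main obstacle is citing or establishing the precise asymptotics $\Delta(G_n) = \Theta(\ln n/\ln\ln n)$ for $p = d/n$; this is standard in the random graphs literature but the exact statement (including the failure of the cleaner formula $\ln n/\ln\ln n$ to be an equality rather than $\Theta$) should be quoted carefully from \cite{Frieze2015}. A secondary, minor point is to confirm that the lower-bound construction of Theorem~\ref{lem:max_degree} is compatible with the asymmetric strategy-set model here and with the fact that $c_n$ grows (it is, since only three colors and $k-1$ designated neighbors are used, and $c_n\ge 3$), and that the required assumption $\Delta(G_n) > k-1$ holds with high probability (immediate since $\Delta(G_n)\to\infty$ while $k$ is fixed). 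Everything else is a direct substitution into Theorem~\ref{lem:max_degree}.
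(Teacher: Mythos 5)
Your proposal is correct and follows essentially the same route as the paper: the paper's proof likewise just combines the two-sided degree bound of Theorem~\ref{lem:max_degree} with the standard fact that $\Delta(G_n) = \Theta(\ln n/\ln\ln n)$ with high probability for $G_n \sim G(n,d/n)$, citing \cite[Chapter~3]{Frieze2015}. Your additional checks (uniformity in $c_n$, applicability of the lower-bound construction, $\Delta(G_n) > k-1$ w.h.p.) are sensible elaborations of details the paper leaves implicit.
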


\begin{proof}
The bounds follow directly from Theorem \ref{lem:max_degree} and the fact that for a random graph $G_n \sim G(n,d/n)$ with $p = d/n$, we have $\Delta(G_n) \approx \mathcal{O}(\ln(n)/\ln\ln(n))$ with high probability (see, e.g., \cite[Chapter 3]{Frieze2015}). 
\end{proof}

If, in addition, the strategy sets are drawn according to a sequence of distributions that satisfy the so-called \emph{common color property}, and all weights are equal to one (corresponding to the games studied in \cite{Apt2015}), then we can even prove that the $(\epsilon, k)$-Price of Anarchy is bounded by a constant. 

\begin{definition}[Common color property]\label{def:common}
For a sequence of integers $(c_n)_{n \in \N}$, we say that a sequence of probability distributions $(\mathcal{F}_n)_{n \in \N}$ over $2^{[c_n]} \setminus \emptyset$ satisfies the \emph{common color property} if there exists some constant $d_0 > 0$ (independent of $n$) such that for $A^1_n, A^2_n \sim \mathcal{F}_n$, $\inf_{n} \mathbb{P}( A^1_n \cap A^2_n \neq \emptyset) \geq  d_0$.
\end{definition}

Intuitively, the common color property requires that with positive probability any two players have a color in common in their strategy sets. (Note that in the deterministic setting the Price of Anarchy does not improve if all players have a color in common (see \cite{Rahn2015}).) In particular, this condition is satisfied if we draw the strategy sets uniformly at random from $2^{[c]} \setminus \emptyset$ with $d_0 = \frac12$. 

\begin{theorem}\label{thm:average_degree}
Let $\epsilon \geq 1$, $k \geq 2$ and $d > 0$ be constants. 
Let $(c_n)_{n \in \N}$ be a sequence of integers with $c_n \geq 3$ for all $n$ and let $(\mathcal{F}_n)_{n \in \N}$ be a sequence of strategy set distributions satisfying the common color property. 
Let $\mathcal{G}_{G_n, (S_i)}(c_n)$ be the set of all coordination games $\Gamma = (G_n, c_n, (S_i), \vec{1}, \vec{1},\mathbf{0})$ on graph $G_n \sim G(n,d/n)$ with $c_n$ colors, strategy set $S_i \sim \mathcal{F}_n$ for every $i$, equal-split distribution rule, unit weights and no individual preferences.
Then there exists a constant $\beta = \beta(d,\epsilon)$ such that $(\epsilon,k)\text{-PoA}(\mathcal{G}_{G_n,(S_i)}(c_n)) \leq \beta$.
\end{theorem}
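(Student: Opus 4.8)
The plan is to bound the optimum trivially and to show that, with high probability, \emph{every} $(\epsilon,k)$-equilibrium already satisfies a constant fraction of the edges. Since all weights equal $1$ and there are no individual preferences, for any profile $t$ we have $u(t)=|E(t)|$, the number of monochromatic (satisfied) edges; in particular $u(s^*)=|E(s^*)|\le|E(G_n)|$. So it suffices to find a constant $\delta=\delta(d,d_0)>0$ such that, with high probability over $G_n\sim G(n,d/n)$ and the strategy sets, $u(s)\ge\delta\,|E(G_n)|$ for every $(\epsilon,k)$-equilibrium $s$; then $(\epsilon,k)\text{-PoA}\le 1/\delta=:\beta$, a constant depending only on $d$ and $d_0$ (in particular not on $\epsilon$, $k$, or $c_n$).

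For the structural part, call an edge $e=\{i,j\}$ of $G_n$ \emph{compatible} if $S_i\cap S_j\neq\emptyset$, write $E^+$ for the set of compatible edges, and let $S=\{i\in V:u_i(s)>0\}$. I claim that $E^+\subseteq E(s)\cup\{e\in E(G_n):e\cap S\neq\emptyset\}$ for every $(\epsilon,k)$-equilibrium $s$. Indeed, if a compatible edge $e=\{i,j\}$ is not monochromatic in $s$, pick $\gamma\in S_i\cap S_j$ and let $i$ and $j$ jointly deviate to $\gamma$ (a coalition of size $\le2\le k$); afterwards $e$ is monochromatic, so the post-deviation utility of each of $i,j$ is at least $\tfrac12$, and the equilibrium condition forces $\epsilon\,u_v(s)\ge\tfrac12$ for some $v\in\{i,j\}$, hence $v\in S$. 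Consequently $|E^+|\le u(s)+\sum_{i\in S}\deg_{G_n}(i)$. This is essentially the coalition argument underlying the proof of Theorem~\ref{lem:max_degree}, now exploiting that edge weights equal $1$, so that a single new monochromatic edge already yields utility $\tfrac12$; with general weights this upgrade is unavailable, which is consistent with the theorem being stated only for unit weights.

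It remains to combine this with a few standard facts that hold with high probability for $G(n,d/n)$ together with the random strategy sets: (a) $|E(G_n)|=(1\pm o(1))\,dn/2=\Theta(n)$; (b) $|E^+|\ge\tfrac{d_0}{2}|E(G_n)|$ — here $\mathbb{E}[|E^+|\mid G_n]\ge d_0|E(G_n)|$ by the common color property, and concentration (e.g. McDiarmid with per-coordinate differences $\deg_{G_n}(i)$, using $\sum_i\deg_{G_n}(i)^2=o(n^2)$) gives deviations $o(|E(G_n)|)$; and (c) for a small enough constant $\zeta=\zeta(d,d_0)>0$, the sum of the $\lceil\zeta n\rceil$ largest degrees of $G_n$ is at most $\tfrac{d_0}{4}|E(G_n)|$ — a routine degree-tail estimate, since the number of vertices of degree $\ge\tau$ is concentrated around $n\,\mathbb{P}(\mathrm{Bin}(n-1,d/n)\ge\tau)\le nd^{\tau}/\tau!$, so a large constant $\tau$ kills the contribution of high-degree vertices and a small $\zeta$ kills that of the remaining $\zeta n$ vertices of degree $<\tau$. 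Conditioning on (a)--(c), I split into two cases. If $|S|\ge\zeta n$, then since $u_i(s)\ge\tfrac12$ for every $i\in S$, $u(s)\ge|S|/2\ge\zeta n/2=\Omega(|E(G_n)|)$. If $|S|<\zeta n$, then $\sum_{i\in S}\deg_{G_n}(i)$ is at most the sum of the $\lceil\zeta n\rceil$ largest degrees, so by (c) and the inequality above $u(s)\ge|E^+|-\tfrac{d_0}{4}|E(G_n)|\ge\tfrac{d_0}{4}|E(G_n)|$. Either way $u(s)\ge\delta\,|E(G_n)|\ge\delta\,u(s^*)$, completing the proof. The step needing the most care is (c) together with the choice of $\zeta$: one cannot bound $\sum_{i\in S}\deg_{G_n}(i)$ for the \emph{equilibrium-dependent}, merely $o(n)$-sized set $S$ directly, since a few high-degree vertices could in principle carry much of the degree mass; it is the fixed-threshold split at $\zeta n$ — together with the fact that in the complementary regime $|S|\ge\zeta n$ one needs no control on the degree sum at all — that makes the argument uniform over all possible sets $S$.
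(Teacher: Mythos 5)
Your proof is correct, but it takes a genuinely different route from the one in the paper. The paper's argument is matching-based: it takes a maximum matching $M(G_n)$, invokes the result of \cite{Karp1981} that $\mu(G_n)/n$ converges to a constant, applies the common color property plus a Chernoff bound to extract $\Omega(n)$ pairwise node-disjoint compatible matching edges, and uses the pair-deviation argument on each such edge to get $u_i(s)+u_j(s)\ge 1/(2\epsilon)$; disjointness then gives $u(s)=\Omega(n/\epsilon)$ against $u(s^*)\le|E(G_n)|=O(n)$. You instead work with \emph{all} compatible edges $E^+$ and the positive-utility set $S$, reuse the coalition argument of Theorem~\ref{lem:max_degree} to show $|E^+|\le u(s)+\sum_{i\in S}\deg_{G_n}(i)$, and close with a case split at $|S|=\zeta n$ plus a degree-tail estimate. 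What each buys: the paper's route is shorter and offloads all the probabilistic work onto a single black-box matching result (disjointness makes the welfare lower bound immediate, with no double-counting to control), whereas yours avoids the matching theorem at the price of the concentration statements (b) and (c); of these, (c) is the only genuinely fiddly step, since Markov alone gives just a constant-probability bound and you need, e.g., concentration of $\sum_i\deg_{G_n}(i)^2$ or of the truncated degree counts --- standard for $G(n,d/n)$, but worth writing out. A pleasant byproduct of your version is that your constant $\beta$ is independent of $\epsilon$: with unit weights and equal split, any player with positive utility already has utility at least $1/2$, so membership in $S$ (which only needs $u_v(s)\ge 1/(2\epsilon)>0$) converts into the $\epsilon$-free bound $u_v(s)\ge 1/2$. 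The paper's stated bound $\beta(d,\epsilon)$ retains the $\epsilon$-dependence, although the same observation would remove it there too.
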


The proof of Theorem~\ref{thm:average_degree} relies on the following probabilistic result regarding the maximum size of a matching in Erd\H{o}s-R\'enyi random graphs \cite{Karp1981}.

\begin{lemma}[\cite{Karp1981}]
Let $\delta > 0$ be fixed. Then there is a constant $\mu^* = \mu^*(d)$ such that
$$
\lim_{n \rightarrow \infty} \mathbb{P}_{G_n \sim G(n,d/n)} \{\left|\mu(G_n)/n - \mu^*\right| \geq \delta \} = 0,
$$
where $\mu(G_n)$ is the size of a maximum matching in $G_n$.
\end{lemma}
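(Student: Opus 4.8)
The plan is to decompose the claim into two independent parts: a \emph{concentration} statement, showing that $\mu(G_n)$ is tightly clustered around its expectation, and a \emph{convergence} statement, showing that $\mathbb{E}[\mu(G_n)]/n$ tends to a limit $\mu^* = \mu^*(d)$. Together these give $\mu(G_n)/n \to \mu^*$ in probability: for large $n$ the mean is within $\delta/2$ of $\mu^*$, and concentration keeps $\mu(G_n)/n$ within $\delta/2$ of the mean with probability $1-o(1)$, so a triangle inequality yields the stated $\delta$-estimate. The concentration part is routine; the content lies in identifying and controlling the limit of the normalized mean.

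First I would establish concentration via the vertex-exposure martingale and the Azuma--Hoeffding inequality. The key structural fact is that the matching number is $1$-Lipschitz under rewiring a single vertex: if $G$ and $G'$ differ only in the edges incident to one vertex $v$, then $|\mu(G)-\mu(G')|\le 1$, because a maximum matching of $G$ uses at most one edge at $v$, and deleting it yields a matching valid in $G'$ (and symmetrically). Exposing the neighborhoods of vertices $1,\dots,n$ one at a time gives a Doob martingale with increments bounded by $1$, so Azuma yields $\mathbb{P}(|\mu(G_n)-\mathbb{E}\mu(G_n)|\ge t)\le 2\exp(-t^2/(2n))$. Taking $t=\tfrac{\delta}{2}n$ makes the right-hand side $2\exp(-\delta^2 n/8)\to 0$, so $\mu(G_n)/n$ lies within $\delta/2$ of $\mathbb{E}\mu(G_n)/n$ with probability $1-o(1)$. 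It therefore suffices to prove $\mathbb{E}\mu(G_n)/n\to\mu^*$.

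For the convergence of the mean I would follow the Karp--Sipser leaf-removal analysis. The enabling observation is that a degree-one vertex can always be assumed matched to its unique neighbor in some maximum matching (a one-step augmenting exchange proves this), so one may greedily match each pendant vertex, delete both endpoints, and decrement the degrees of the neighbor's remaining neighbors, all without loss of optimality. Running this peeling on $G(n,d/n)$---whose degrees are asymptotically Poisson with mean $d$---I would track the vector of degree-class sizes $(V_j)_{j\ge 0}$ as a Markov chain and apply Wormald's differential-equation method: the rescaled trajectory concentrates, within $o(n)$ with high probability, around the solution of an explicit autonomous ODE system, and the number of pairs matched in this phase is $n$ times an integral of that trajectory. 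The peeling terminates at the \emph{core} $K$ (all surviving vertices of degree $\ge 2$), whose size and limiting degree profile are read off from the ODE at the stopping time $V_1=0$. Summing the phase-one pairs with a maximum matching of $K$ then gives $\mathbb{E}\mu(G_n)/n\to\mu^*$ for the resulting deterministic constant $\mu^*$.

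The main obstacle is the analysis of the core $K$: one must show that a maximum matching of $K$ leaves only $o(n)$ vertices uncovered, so that greedy leaf-removal is asymptotically optimal and the peeling integral alone pins down $\mu^*$. For $d\le e$ the core is empty with high probability and this is immediate, but for $d>e$ the core is a genuinely dense random object, and establishing a near-perfect matching on it (equivalently, that the Karp--Sipser lower bound matches the true optimum) is the delicate step---precisely what required the later work of Aronson--Frieze--Pittel. A cleaner but more machinery-heavy alternative that sidesteps the core analysis is the objective method: $G(n,d/n)$ converges in the local-weak sense to the Poisson($d$) Galton--Watson tree, and the matching-ratio continuity results of Bordenave--Lelarge--Salez give $\mathbb{E}\mu(G_n)/n\to\mu^*$ directly, with $\mu^*$ expressed through the fixed point of a recursive distributional equation on the tree. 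Either route supplies the limit; combined with the Azuma concentration above, it yields convergence in probability and hence the lemma.
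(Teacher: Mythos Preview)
The paper does not prove this lemma at all: it is quoted verbatim as a known result from Karp and Sipser \cite{Karp1981} and then used as a black box in the proof of Theorem~\ref{thm:average_degree}. There is therefore no ``paper's own proof'' to compare your proposal against.

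That said, your sketch is a faithful outline of how the result is actually established in the literature. The decomposition into (i) Azuma--Hoeffding concentration via the vertex-exposure martingale (using that $\mu$ is $1$-Lipschitz in single-vertex neighborhoods) and (ii) convergence of $\mathbb{E}[\mu(G_n)]/n$ is exactly the right architecture. Your identification of the Karp--Sipser leaf-removal process, tracked by Wormald's differential-equation method, as the engine for part (ii) is also correct, and you are right that the hard case is $d>e$, where the residual core is linear-sized and one must show it admits a near-perfect matching; this is precisely the content of the Aronson--Frieze--Pittel refinement you mention. The alternative route through local weak convergence and the Bordenave--Lelarge--Salez continuity result is likewise valid and arguably cleaner, though it postdates the cited reference by three decades. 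For the purposes of this paper, however, none of this is needed: the authors simply invoke the lemma as a citation, so in a write-up matching the paper you would do the same.
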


\begin{proof}[Proof of Theorem~\ref{thm:average_degree}]
Let $M(G_n)$ be a maximum matching in $G_n$ of size $\mu(G_n)$. For a fixed edge $e = \{i,j\} \in M(G_n)$, the probability that the strategy sets $S_i$ and $S_j$ of players $i$ and $j$ satisfy $S_i \cap S_j \neq \emptyset$ is at least $d_0$ by the common color property. (Here we implicitly use that the strategy sets are drawn independently from the graph topology.)
Combining this with the lemma above, it follows that there exists a constant $\beta_0 = \beta_0(d_0)$ such that with high probability 
there exist $\beta_0 n$ pairwise node-disjoint edges in $G_n$ for which the players corresponding to the endpoints have a common color in their strategy set. (This follows from standard Chernoff bound arguments as in the proof of Theorem \ref{thm:random_graph}.) 
As a consequence, $u_i(s) + u_j(s) \geq 1/(2\epsilon)$ for any $(\epsilon,k)$-equilibrium because otherwise player $i$ and $j$ could jointly deviate (as $k \ge 2$) to their common color. 

This implies that there exists a constant $\beta_1 = \beta_1(d_0,\epsilon)$ such that with high probability
$$
(\epsilon,k)\text{-PoA}\left(\mathcal{G}_{G_n,(S_i)}(c_n)\right) \leq \beta_1 \frac{\mu(G_n)}{n}.
$$
Finally, using again standard Chernoff bound arguments as in the proof of Theorem \ref{thm:random_graph}, it follows that $\mu(G_n)/n \leq \beta_2(d)$ with high probability. This completes the proof.
\end{proof}

The statement of Theorem~\ref{thm:average_degree} does not hold for $k = 1$. To see this, consider the uniform distribution over strategy sets $\{s_0,s_1\},\dots,\{s_0,s_n\}$. In the strategy profile where every player picks her color different from $s_0$, at most a constant number of edges will be satisfied with high probability. Thus, $(\epsilon,1)\text{-PoA} \geq \beta n$ for some $\beta$ with high probability.

\section*{Acknowledgements}
The first author thanks Remco van der Hofstad for a helpful discussion on random graph theory and, in particular, the results in \cite{Anantharam2016}.

 \bibliographystyle{plain}
 \bibliography{references}

\newpage
\appendix

\section{On possible extensions of coordination games}
\label{app:extensions}
%\gsrem{seems this section is not referred to; might be good to add a pointer to it somewhere I think}

We discuss some possible (natural) extensions of the coordination game model introduced in Section \ref{sec:pre}. However, we show that the results obtained in Section \ref{sec:poa} and/or Section \ref{sec:existence} no longer hold for these extensions.% These arguments also hold for general clustering games, since coordination games are a special case.

\subsection{Global distribution rules}
A first natural generalization would be to look at more `global' distribution rules. However, already for slight generalizations of local distribution rules, it can be shown that there exist distribution rules that do not correspond to a generalized weighted Shapley distribution rule, but still guarantee the existence of a pure Nash equilibrium. 

For example consider \emph{edge-based distribution rules} defined by a function $g_e : N \times 2^N \rightarrow \R$ for every $e = \{a,b\} \in E$ determining shares $g_e(i,S) = \alpha_{i,e,S}$ so that if players $a$ and $b$ play the same color (we define $g_e(i,S) = 0$ if $e \notin S$), and $\{a,b\} \subseteq S \subseteq N$ is the set of all players that also play that common color, then player $i \in S$ receives a share of
$$
\frac{\alpha_{i,e,S}}{\sum_{j \in S} \alpha_{j,e,S}} w_{ab}
$$
of the edge weight $w_{ab}$, that is, his utility in strategy profile $s$ is (remember that $C_{s_i}$ is the set of all players choosing color $k = s_i$)
$$
u_i(s) = q_i(s_i) + \sum_{ e \in E : e \subseteq C_{s_i} } \frac{\alpha_{i,e,C_{s_i}}}{\sum_{j \in C_{s_i}} \alpha_{j,e,C_{s_i}}} w_e.
$$
For example this captures the case of egalitarian sharing in which every edge weight is shared equally between all players choosing the same color (if $\alpha_{i,e,S} = 1$ for all $i \in S$). 

We show that there exists an edge-based distribution rule, not corresponding to a generalized weighted Shapley value, that guarantees the existence of a pure Nash equilibrium. Consider the graph $G = (V,E)$ with $V = \{1,2,3\}$ and $E = \{1,2\}$.  We define $\alpha_{3,\{1,2\},S} = 0$ for $S = \{1,2,3\}$, that is, player 3 never gets a share of edge $\{1,2\}$. Moreover, we define $\alpha_{1,\{1,2\},\{1,2\}} = \alpha_{2,\{1,2\},\{1,2\}} = 1$, but $\alpha_{1,\{1,2\},\{1,2,3\}} = 1$ and $\alpha_{1,\{1,2\},\{1,2,3\}} = 3$. That is, if players $1$ and $2$ play a common color, and player $3$ a different color, then the edge-weight is split evenly, whereas if player $3$ also plays the same color, then the shares are $1/4w_{12}$ for player $1$, and $3/4w_{12}$ for player $2$. Roughly speaking, although player $3$ never receives a share from edge $w_{12}$, he does in fact influences how the edge-weight is split between players $1$ and $2$.

For any fixed number of colors, and sets of individual preferences, and any weight $w_{12}$, it can be shown that a pure Nash equilibrium always exists. Even stronger, it can be shown that any better-response sequence converges to a pure Nash equilibrium. Moreover, a similar example can be embedded in an arbitrary graph topology. The details of these observations are left to the reader.

%From an arbitrary strategy profile, we always reach a Nash equilibrium using the following player-sequence of best-response dynamics: $(3,1,2,1,2)$. This follows directly from the fact that a best-response of player $3$ is to switch to a color for which his individual preference is maximized (after that he never has an incentive to deviate since he never gets a share from the edge $w_{12}$). Players $1$ and $2$ are then guaranteed to converge to a pure Nash equilibrium using best-responses.

\subsection{Hypergraph coordination games}
Another natural extension would be to consider hypergraph coordination games where edges can have size larger than two as well. However, it can be shown that for any hypergraph $G = (V,\mathcal{E})$, in which there is at least one pair of edges that share strictly more than one element,  there exists a local distribution rule $\alpha$ (describing for every edge how the edge-weight is split among the players part of that edge, in case all choose the same color) so that for all coordination games in $\mathcal{G}(G,c,\alpha)$ a pure Nash equilibrium is guaranteed to exist, but $\alpha$ does not correspond to a generalized weighted Shapley distribution rule.

Moreover, the Price of Anarchy also immediately becomes unbounded already on instances with one hyper-edge of size three.

\subsection{Color-dependent edge-weights}
Another possible extension would be to introduce color-dependent edge-weights, so that the edge-weights split between two players might can differ depending on the common color that they have. The characterizations in Theorem \ref{thm:best_response_2} and \ref{thm:pure_nash} still hold (the results obtained there are even stronger, since we can obtain the characterization already in the special case that the edge-weights are actually color-independent). However, the Price of Anarchy  becomes unbounded already on a graph with one edge.

\section{Resource Allocation Games}\label{sec:app_resource}

A resource allocation game \cite{Gopalakrishnan2013,Marden2013} 
$
\mathcal{G} = (N,R,(S_i)_{i \in N},(W_r)_{r \in R},(f_r)_{r \in R})
$
is given by a set $N = \{1,\dots,n\}$ of players, a set $R = \{1,\dots,m\}$ of resources, and \emph{strategy sets} $S_i \subseteq 2^R$  for players $i \in N$. Moreover, $W_r :2^N \rightarrow \R$ denotes the welfare function for resource $r \in R$ and $f_r$ the distribution rule of resource $r \in R$. A \emph{distribution rule} $f^W : N \times 2^N$ for welfare function $W$ is mapping 
with $f(i,S) = 0$ if $i \notin S \subseteq N$. We assume that the $f_r^W$ are \emph{efficient}, meaning that $\sum_{i \in S} f_r(i,S) = W_r(S)$ for all $S \subseteq N$. 

For a given strategy profile $s = (s_1,\dots,s_n) \in \times_i S_i$ the utility (pay-off) of player $i$ is defined as
$$
u_i(s) = \sum_{r \in s_i} f_r(i,N_r(s))
$$ 
with $N_r(s) = \{i \in N : r \in s_j\}$ the set of players using resource $r$ in profile $s$.

A well-known result from cooperative game theory states that for any fixed welfare function $W$, there exist real numbers $(\beta_T^W)_{T \subseteq N}$ such that 
$$
W(S) = \sum_{T \subseteq N} \beta_T^W g_T(S)
$$
where, for $T \subseteq 2^N$, $g_T : 2^N \rightarrow \R$ is the welfare functions given by $g_T(S) = 1$ if $T \subseteq S$ and zero otherwise. A distribution rule $f^W$ is said to have a \emph{base decomposition} \cite{Gopalakrishnan2013} if it can be written as
$$
f^W(i,S) = \sum_{T \subseteq N} \beta_T^W f^T(i,S)
$$
where $f^T(i,S)$ is given by $f^T(i,S) = 0$ if $T \nsubseteq S$, and, if $T \subseteq S$,  $f^T(i,S) = \omega^T$ where $\omega(i) = 0$ if $i \notin S$, and $\omega(i) > 0$ for at least one $i \in S$. This is equivalent to saying that the distribution rules $f^T$ for the welfare functions $g_T$ is a generalized weighted Shapley distribution rule \cite{Gopalakrishnan2013}.
\medskip

\subsection{Clustering games as resource allocation games}
For a fixed graph $G = (V,E_c \cup E_a)$, distribution rule $\alpha$, and $c \in \N$, any game in $\mathcal{G}(G,c,\alpha)$ can be modeled as a resource allocation game. That is, for every $\Gamma \in \mathcal{G}$, there exists a resource allocation game $\Psi = (N,R,(S_i)_{i \in N},W,f^W)$ with a one-to-one correspondence between the strategy profiles of $\Gamma$ and $\Psi$ that preserves improving moves. Here, every resource is equipped with welfare function 
\begin{equation}\label{eq:app_welfare}
W(S) = \sum_{T \subseteq \{V,E\}} \beta^W_T g_T
\end{equation}
where $\beta^W_T = 1$ if $T \in V$ or $T \in E$ with $\tau(T) = 1$, and $\beta^W_T = -1$ if $ T \in E$ with $\tau(T) = 0$. Note that the welfare function $W$ is independent of $w$ and $q$. Moreover, the distribution rule $f^W$ has a base decomposition given by $\alpha$. That is, the value $\beta_T^W$ for $T = \{i\}$ with $i \in V$ is always given to player $i$, and for $T \in E$, the corresponding weight $\beta_W^T \in \{-1,1\}$ is split among the players in $T$ according to $\alpha$ (note that this yields an efficient distribution rule).

The modeling of a clustering game as a resource allocation game is done by including many copies of a single resource, a technique also used by Gopalakrishnan et al. \cite{Gopalakrishnan2013}. 
The details of this procedure are not hard to derive and left to the reader at this point.

\subsection{Interpretation of Theorem \ref{thm:pure_nash}.}\label{sec:comparison} 

Gopalakrishnan et al. \cite{Gopalakrishnan2013} show the impressive result that, for any fixed welfare function $W$, if a distribution rule $f^W$ guarantees the existence of a pure Nash equilibrium in \emph{any} resource allocation game $(N,R,(S_i)_{i \in N},W,f^W)$, for arbitrary $N, R,$ and $(S_i)_{i \in N}$), then the distribution rule $f^W$ must be a generalized weighted Shapley distribution rule. (We refer the reader to \cite{Gopalakrishnan2013} for the formal definition of generalized weighted Shapley distribution rules for general resource allocation games. Moreover, any generalized weighted Shapley distribution rule guarantees pure Nash equilibrium existence \cite{Gopalakrishnan2013}.)

Roughly speaking, they first show that if an equilibrium is guaranteed to exist in any game where resources are equipped with welfare function $W$, then the distribution rule $f^W$ must have a base-decomposition (as introduced above). They then continue by showing that generalized weighted Shapley distribution rules (which are base-decomposable by definition) are the only ones guaranteeing existence among all base-decomposable distribution rules.\medskip

In Theorem \ref{thm:pure_nash} we essentially give an alternative, but also stronger, proof for this final step of the proof of Gopalakrishnan et al. \cite{Gopalakrishnan2013}, in the (very) special case where $W$ is of the form (\ref{eq:app_welfare}) and $\beta_W^T > 0$ for all $T \in \{V,E\}$. That is, we show that if a pure Nash equilibrium is always guaranteed to exist in a coordination game with individual preferences, where there are three common strategies (colors), then the distribution rule must be a generalized weighted Shapley distribution rule. This then implies the result of Gopalakrishnan et al. \cite{Gopalakrishnan2013}, since coordination games with individual preferences essentially form a subclass of all resource allocation games where resources are equipped with $W$ (using the modeling of clustering games as resource allocation games mentioned before).

However, Example \ref{exmp:counter_clustering} below illustrates that, in general, this is not true if $\beta_W^T < 0$ for some $T \in E$. That is, if certain coefficients $\beta_W^T$ are negative, then in general it does not suffice to focus on the subclass of corresponding clustering games, in order to derive that $\alpha$ must be a generalized weighted Shapley distribution rule. In this case, one has to make use of more complex resource allocation games, i.e., more complex than clustering games with individual preferences, in order to guarantee that $\alpha$ is a generalized weighted Shapley distribution rule (the resource allocation games used by Gopalakrishnan et al. \cite{Gopalakrishnan2013} for this final step are indeed more complex than clustering games in this case).

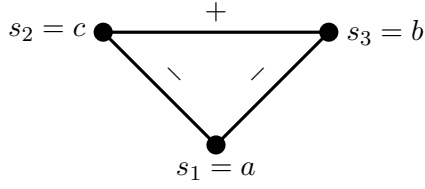
\begin{figure}[t]
\centering
\begin{tikzpicture}[scale=3]
\coordinate (1) at (0,0); 
\coordinate (2) at (-0.5,0.5);
\coordinate (3) at (0.5,0.5);

\node at (1) [circle,scale=0.7,fill=black] {};
\node (a1) [below=0.1cm of 1]  {$s_1 = a$};
\node at (2) [circle,scale=0.7,fill=black] {};
\node (a2) [left=0.1cm of 2]  {$s_2 = c$};
\node at (3) [circle,scale=0.7,fill=black] {};
\node (m1) [right=0.1cm of 3]  {$s_3 = b$};

\path[every node/.style={sloped,anchor=south,auto=false}]
(1) edge[-,very thick] node {$-$} (2) 
(2) edge[-,very thick] node {$+$} (3)
(3) edge[-,very thick] node {$-$} (1);  
\end{tikzpicture}
\caption{Counter-example for Theorem \ref{thm:poa} in the case of a clustering game with anti-coordination edges.}
\label{fig:counter_clustering}
\end{figure}

\begin{example}\label{exmp:counter_clustering}
Consider the instance in Figure \ref{fig:counter_clustering}, and let $\alpha$ be some arbitrary local distribution rule. 
Fix arbitrary weights $w_{12}, w_{23}$ and $w_{31}$ and individual preferences $q^i_l$ for $i = 1,2,3$ and $l \in  C = \{1,\dots,c'\}$. (We use $c'$ here to denote the number of strategies in the common strategy set instead of $c$.) We claim that a pure Nash equilibrium always exists.

Consider the strategy profile in Figure \ref{fig:counter_clustering} and assume without loss of generality that $a$ is the color for which player $1$ his individual preference is maximal, i.e., $a = \arg\max \{ q_1(l) : l \in C\}$. If there is some profile $s' = (a,c,b)$ with $a \neq b,c$ (but possibly $b = c$) where $c$ and $b$ are best-responses for resp. players $2$ and $3$, then we find a pure Nash equilibrium, by definition of $a$. 

It now suffices to show that for any profile of the form $(a,c,b)$ where either player $2$ or $3$ has a best-response to $a$, we can always perform a sequence of best-response moves that end up in a pure Nash equilibrium. Assume without loss of generality, that $a$ is a best-response for player $2$ in the profile $(a,c,b)$.
This in particular implies that 
$$
a = \arg\max \{ q_2(l) : l \in C\}.
$$ 
%Note that $a$ is a best-response in the profile $s = (a,b,c)$ for player $1$, since all edges are satisfied.
We now consider player $3$ in the profile $(a,a,b)$.
\begin{enumerate}[1)]
\item \textbf{Player $3$ only  has $a$ as best-response.} Then we let player $3$ switch to $a$ to get the profile $(a,a,a)$. Note that $a$ is still a best-response for player $2$ as well, since $w_{23}$ is non-negative and edge $\{2,3\}$ is now satisfied as well. To summarize, both players $2$ and $3$ are playing a best-response in the profile $(a,a,a)$. Now, if player $1$ has a best-response different from $a$, say $c$, then in particular $a$ remains a best-response for both players $2$ and $3$ in the profile $(c,a,a,)$, since edges $\{1,2\}$ and $\{1,3\}$ are anti-coordination edges and their weights are non-negative. That is, $(c,a,a)$ is a pure Nash equilibrium.
\item \textbf{Player $3$ has only $b$ as best-response.} Then both players $2$ and $3$ are playing a best-response in the profile $(a,a,b)$. Now suppose player $1$ has a different best-response than $a$.
\begin{enumerate}[i)]
\item \textbf{Player $1$ has a best-response to some color $d \neq b$.} Then $a$ is still a best-response for player $2$ in the profile $(d,a,b)$. If player $3$ now has a best-response other than $b$, then it must be $a$, otherwise he would have had a response better than $b$ in the profile $(a,a,b)$ as well. Clearly, in the profile $(a,a,d)$ both players $2$ and $3$ are playing a best-response. If player $1$ still has a best-response, it must be $b$ (otherwise he would have had a different best-response than $d$ before). The profile $(a,a,b)$ is a pure Nash equilibrium.
\item \textbf{Player $1$ only has a best-response to $b$.} Then $a$ is still a best-response for player $2$ in $(b,a,b)$. Suppose that player $3$ has a best-response other than $b$. If $a$ is a best-response for player $3$, then we reach the pure Nash equilibrium $(b,a,a)$, since player $2$ clearly plays a best-response, and player $1$ cannot have a better response, otherwise deviating to $b$ in the profile $(a,a,b)$ was not a best-response. 

\ \ \ \ \ Therefore, suppose player $3$ has a best-response different from $a$, say $e$. Then $b$ is still a best-response for player $1$. If player $2$ has a better response than $a$, then it must be $e$, otherwise $a$ would not have been a best-response in the initial profile. Clearly player $3$ plays a best-response in the profile $(b,e,e)$. If player $1$ still has a better response than $b$, then it must be $a$, otherwise $b$ would not have been a best-response in the profile $(b,a,b)$. The resulting profile $(a,e,e)$ is a pure Nash equilibrium, since player $3$ cannot play a better response, otherwise $e$ would not have been a best-response in the profile $(a,b,b)$.
\end{enumerate}
\item \textbf{Player $3$ has $c \neq a,b$ as best-response.} 
Player $2$ now cannot have a best-response to some color $f \neq c$, otherwise $a$ would not have been a best-response in the initial profile $(a,b,b)$. Therefore, suppose that $c$ is a best-response. Then the resulting profile $(a,c,c)$ is a pure Nash equilibrium, since player $1$ has maximum possible utility, and player $3$ clearly has no better response than $c$, otherwise $c$ would not have been a best-response in $(a,a,b)$. We can now assume to be in the profile $(a,a,c)$ in which players $2$ and $3$ play a best-response. 
\begin{enumerate}[i)]
\item \textbf{Player $1$ has a best-response to $d \neq c$.} Then either the resulting profile $(d,a,c)$ is a pure Nash equilibrium, or player $3$ still has a best-response to $a$, but then the resulting profile $(d,a,a)$ is a pure Nash equilibrium.
\item \textbf{Player $1$ has a best-response to $c$.} Then player $2$ still has $a$ as best-response. Suppose that player $3$ now has a better response. If it it $a$, then the resulting profile $(a,a,c)$ is a pure Nash equilibrium. Therefore, suppose that player $3$ has a better response to some color $g \neq a$. Then $c$ is still a best-response for player $1$. If $a$ is also still best-response for player $2$, then $(c,a,g)$ is a pure Nash equilibrium. Therefore, suppose that player $2$ has a better response. Then this must be $g$ (similar reasoning as before). Clearly, in the profile $(c,g,g)$ player $3$ is still playing a best-response. Suppose that player $2$ still has a better-response, then this must be $a$. The profile $(a,g,g)$ is a pure Nash equilibrium.
\end{enumerate}
\end{enumerate}
\end{example}

\end{document}